\newcommand{\F}[0]{\mathcal{F}}
\newcommand{\C}[0]{\mathcal{C}}
\newcommand{\Fin}[0]{\mathcal{F}_\text{in}}
\newcommand{\G}[0]{\mathcal{G}}
\newcommand{\Hh}[0]{\mathcal{H}}
\newcommand{\A}[0]{\mathcal{A}}
\newcommand{\V}[0]{\mathcal{V}}
\newcommand{\K}[0]{\mathbb{K}}
\newcommand{\bs}[0]{\backslash}
\newcommand{\ep}[0]{\epsilon}
\newtheorem{thm}{Theorem}
\newtheorem{lem}[thm]{Lemma}
\newtheorem{defn}[thm]{Definition}
\begin{document}
%
% --- Author Metadata here ---

%\conferenceinfo{}{}

%\CopyrightYear{2007} % Allows default copyright year (20XX) to be over-ridden - IF NEED BE.
%\crdata{0-12345-67-8/90/01}  % Allows default copyright data (0-89791-88-6/97/05) to be over-ridden - IF NEED BE.
% --- End of Author Metadata ---

\title{\textbf{On the robust hardness of \\
Gr\"{o}bner basis computation}}

%
% You need the command \numberofauthors to handle the 'placement
% and alignment' of the authors beneath the title.
%
% For aesthetic reasons, we recommend 'three authors at a time'
% i.e. three 'name/affiliation blocks' be placed beneath the title.
%
% NOTE: You are NOT restricted in how many 'rows' of
% "name/affiliations" may appear. We just ask that you restrict
% the number of 'columns' to three.
%
% Because of the available 'opening page real-estate'
% we ask you to refrain from putting more than six authors
% (two rows with three columns) beneath the article title.
% More than six makes the first-page appear very cluttered indeed.
%
% Use the \alignauthor commands to handle the names
% and affiliations for an 'aesthetic maximum' of six authors.
% Add names, affiliations, addresses for
% the seventh etc. author(s) as the argument for the
% \additionalauthors command.
% These 'additional authors' will be output/set for you
% without further effort on your part as the last section in
% the body of your article BEFORE References or any Appendices.

%>>>>>>>>>>>>>>>>>>>>>
%%%%%%\numberofauthors{2} 

%  in this sample file, there are a *total*
% of EIGHT authors. SIX appear on the 'first-page' (for formatting
% reasons) and the remaining two appear in the \additionalauthors section.
%
\author{
% You can go ahead and credit any number of authors here,
% e.g. one 'row of three' or two rows (consisting of one row of three
% and a second row of one, two or three).
%
% The command \alignauthor (no curly braces needed) should
% precede each author name, affiliation/snail-mail address and
% e-mail address. Additionally, tag each line of
% affiliation/address with \affaddr, and tag the
% e-mail address with \email.
%
% 1st. author
%\alignauthor
\textbf{David Rolnick}\\
       Massachusetts Institute~of Technology\\
       Cambridge, MA\\
       \url{drolnick@mit.edu}\\
\and
\textbf{Gwen Spencer}\\
       Smith College\\
       Northampton, MA\\
       \url{gwenspencer@gmail.com}
}
\date{}

\maketitle
\begin{abstract}
The computation of Gr\"obner bases is an established hard problem. By contrast with many other problems, however, there has been little investigation of whether this hardness is robust.  In this paper, we frame and present results on the problem of approximate computation of Gr\"obner bases. We show that it is NP-hard to construct a Gr\"obner basis of the ideal generated by a set of polynomials, even when the algorithm is allowed to discard a $1 - \ep$ fraction of the generators, and likewise when the algorithm is allowed to discard variables (and the generators containing them). Our results shows that computation of Gr\"obner bases is robustly hard even for simple polynomial systems (e.g.~maximum degree 2, with at most 3 variables per generator). We conclude by greatly strengthening results for the Strong $c$\nobreakdash-Partial Gr\"obner problem posed by De Loera et al.~\cite{deloera}. Our proofs also establish interesting connections between the robust hardness of Gr\"obner bases and that of SAT variants and graph-coloring.
\end{abstract}

%\category{H.4}{Information Systems Applications}{Miscellaneous}

%\category{D.2.8}{Software Engineering}{Metrics}[complexity measures, performance measures]

%\terms{Theory}

\noindent\textbf{Keywords:} Gr\"{o}bner Basis, Polynomial System, Complexity, Approximation, Hardness, Combinatorial Optimization, Satisfiability\\

\section{Introduction}
%\noindent \textsc{Notation and background.} 
A classic problem in computational algebra is producing a Gr\"{o}bner basis for an ideal. Given a field $\K$ and a system of polynomials $\F\subseteq \K[x_1,\ldots,x_n]$, let $\langle \F\rangle$ denote the ideal generated by the polynomials of $\F$. For a given term order, a \emph{Gr\"{o}bner basis} for the ideal $\langle \F\rangle$ is a set of generators for which the leading terms generate the ideal of leading terms in $\langle \F\rangle$.

Given a Gr\"{o}bner basis for an ideal, important problems such as ideal membership can easily be resolved; for more background, see Cox, Little, and O'Shea \cite{cox}.
%The best-known general algorithm for computing a Gr\"{o}bner basis is due to Buchberger \cite{buchberger} and does not run in polynomial time.
The existence of a polynomial-time algorithm for computing Gr\"{o}bner bases is precluded by a number of hardness results. For general $\F$, the problem is EXPSPACE-complete \cite{kuehnle, mayr:survey}. The maximum degree of polynomials in a Gr\"{o}bner basis can also grow exponentially, even for zero-dimensional ideals \cite{giusti,mayr,moller,ritscher}.

Despite strong hardness results for the general problem of computing a Gr\"{o}bner basis, there is hope for efficient methods for restricted classes of polynomial systems. De Loera et al.~\cite{deloera} give a polynomial-time algorithm for Gr\"{o}bner basis computation in the case where $\F$ encodes the problem of properly coloring a chordal graph. Recent work of Cifuentes and Parrilo \cite{parrilo} in computational algebraic geometry has begun to unearth rich connections between restrictions on the graphical representation of the structure of a system of polynomials and efficient algorithmic methods for Gr\"{o}bner basis computation. For a related problem about computing Nullstellensatz certificates, Faug\^ere et al.~\cite{Faugere} have shown that for certain quadratic \textit{fewnomial} systems with empty varieties, imposing an extra inequality constraining the matching number of a graphical representation of the system allows polynomial-time computation of a certificate of inconsistency (and furthermore, such a certificate has linear size, while in the general case the best size bounds are exponential).

Where is the boundary between polynomial systems for which the highly-useful tool of a Gr\"{o}bner basis can be efficiently produced for $\langle \F\rangle$, and those systems for which this is impossible? Can weaker restrictions on $\F$ than those studied in \cite{parrilo} and \cite{deloera} still guarantee existence of an efficient method for Gr\"{o}bner basis computation? Conversely, can even $\F$ with quite simple representations generate ideals whose Gr\"{o}bner bases  evade efficient computation?

Instead of asking how much time and space are required to solve a problem exactly, we can ask whether the problem can be solved approximately in limited time and space.  That is, is the problem is ``robustly hard,'' resisting even approximate solution? In combinatorial optimization, the hardness of obtaining approximate solutions has been studied extensively for problems such as Graph Coloring, Traveling Salesman, Satisfiability, Maximum Cut, and Steiner Tree. For Gr\"{o}bner basis computation, however, much study has been devoted to exact hardness, but apparently little attention has been devoted to robust hardness.

Given a polynomial system $\F$ for which we would like to compute a Gr\"{o}bner basis for  $\langle\F\rangle$, what does it mean to give an approximate answer? One natural answer is to allow $\F$ itself to be approximated. In this paper, we show that even selectively throwing out a constant fraction of the generators does not allow us to compute a Gr\"{o}bner basis for the ideal corresponding to the remaining generators in polynomial time.
%Formally, we introduce the \emph{$\ep$\nobreakdash-Fractional Gr\"{o}bner} problem, in which the algorithm can choose an $\ep$\nobreakdash-fraction of the generators and must return a Gr\"{o}bner basis for the ideal corresponding to the chosen generators. In the \emph{Restricted $\ep$\nobreakdash-Fractional Gr\"{o}bner} problem, the constraint is added that the remaining generators must be exactly those containing a subset of the original variables (where this variable subset is selected by the algorithm). Our reductions leverage H{\aa}stad's recently-completed family of three hardness-of-approximation results for \textit{satisfiable} SAT-type problems \cite{hast01,hast14}.

In the following definitions, we assume that we are given as input:  a set of polynomials $\F$ within the ring $\K[x_1,x_2,\ldots,x_n]$ and a value $\ep\in (0,1]$. We let $X=\{x_1,x_2,\ldots,x_n\}$ denote the set of all variables in our polynomial ring.
\begin{defn}[\textbf{$\ep$\nobreakdash-Fractional Gr\"obner problem}]
\label{def:fractional}
Given the input above, output:
\begin{itemize}
\item A subset $\F'\subseteq \F$ such that $|\F'| \ge \ep\cdot |\F|$.
\item A Gr\"{o}bner basis for the ideal $\langle\F'\rangle$ with respect to some lexicographic order.
\end{itemize}
\end{defn}

%\label{restrictedproblem} check how this is refenced later...looks like there were two labels
\begin{defn}[\textbf{Restricted $\ep$\nobreakdash-Fractional Gr\"{o}bner problem}]
\label{def:restricted}
Given the input above, output:
\begin{itemize}
\item Subsets $X'\subseteq X$ and $\F'\subseteq \F$ such that (i) $|\F'| \ge \ep\cdot |\F|$, and (ii) $\F'$ is exactly those elements of $\F$ containing only variables within $X'$.
\item A Gr\"{o}bner basis for the ideal $\langle\F'\rangle$ with respect to some lexicographic order.
\end{itemize}
\end{defn}

\noindent Note that for $\ep=1$, both problems reduce to finding a Gr\"{o}bner basis for $\langle\F\rangle$ itself.

In Section \ref{sec:fractional}, we prove that the $\ep$\nobreakdash-Fractional Gr\"obner problem is NP-hard for every $\ep > 0$ if we are working over an infinite field.  Moreover, this statement for an infinite field holds true even when ``polynomial-time'' is considered to include a dependence on the size of the Gr\"obner basis to be outputted. This is notable since in general the time to write down a Gr\"obner basis explicitly can be doubly exponential in the size of the description of the ideal.  We also show that the $\ep$\nobreakdash-Fractional Gr\"{o}bner problem is NP-hard over any field for $\ep > 7/8$.  Both these results hold even for polynomials with maximum degree 3.

%Both Theorems \ref{thm:main1} and \ref{thm:main2} hold true even when $\F$ is guaranteed to have maximum degree 3:

\begin{thm}\label{thm:main1}
For infinite fields $\K$, the $\ep$\nobreakdash-Fractional Gr\"obner problem is NP-hard for every fixed $\ep>0$. This still holds true even if we require only an algorithm that runs in time polynomial in the larger of\\
\hspace*{4mm}(i) The size of input polynomial system $\F$,\\
\hspace*{4mm}(ii) The size of the Gr\"{o}bner basis output for $\langle\F'\rangle$. \\ 
Further, in this particular theorem the condition of a lexicographic order may be omitted. Also, this statement holds true (regardless of term order) even when each polynomial from $\mathcal{F}$  has maximum degree 3.  \end{thm}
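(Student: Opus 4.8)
The plan is to reduce from 3SAT via the classical polynomial encoding and exploit the ``$\ep$-fractional'' slack by \emph{amplification}: padding a single 3SAT instance into a system where the ignored $(1-\ep)$-fraction of generators cannot possibly contain enough information to certify unsatisfiability, so that the solver is forced to reveal a satisfying assignment whenever one exists. Concretely, I would first recall the standard encoding of a 3SAT formula $\varphi$ with variables $y_1,\dots,y_m$ and clauses $C_1,\dots,C_\ell$: introduce ring variables $x_1,\dots,x_m$, add boolean-constraint generators $x_i^2 - x_i$ for each $i$, and for each clause add one degree-$3$ polynomial that vanishes exactly on the clause-satisfying $\{0,1\}$-assignments (e.g. for $C = (y_a \vee \bar y_b \vee y_c)$ the polynomial $(1-x_a)x_b(1-x_c)$, which expands to degree $3$). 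Over an infinite field one checks the classical fact: the variety of this system is nonempty iff $\varphi$ is satisfiable, and when it is nonempty a lexicographic (or any) Gr\"obner basis reveals a point of the variety by back-substitution — in particular the Gr\"obner basis is $\{1\}$ exactly when $\varphi$ is unsatisfiable. (This is where the ``infinite field'' hypothesis is used and where no degree blowup occurs, since the quotient by $\langle x_i^2 - x_i\rangle$ is finite-dimensional and the reduced Gr\"obner basis has polynomially bounded size.)

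Next I would handle the $\ep$ slack. The key step is to take $\varphi$ and produce an auxiliary system $\F$ in which the ``boolean-ness'' and ``clause-satisfaction'' information is redundantly replicated so that discarding any $(1-\ep)$-fraction of generators still leaves a subsystem whose variety faithfully encodes $\varphi$. The cleanest device: fix an integer $N$ with $1/N < \ep$, make $N$ disjoint ``copies'' of the variable set, $x_i^{(1)},\dots,x_i^{(N)}$, and for each copy $t$ include the full encoding of $\varphi$ on $\{x_i^{(t)}\}$, \emph{plus} enough cheap ``linking'' generators (or simply enough duplicate copies of each generator) that an adversary ignoring a $(1-\ep)$-fraction must leave at least one copy $t$ entirely intact. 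Since $\F' \supseteq$ (the encoding of $\varphi$ on copy $t$) as an ideal, the variety of $\F'$ is contained in the variety of that copy's encoding; hence if $\varphi$ is unsatisfiable, $1 \in \langle \F'\rangle$ and the output Gr\"obner basis is forced to be $\{1\}$, while if $\varphi$ is satisfiable the variety is nonempty and the output Gr\"obner basis lets us extract a satisfying assignment in polynomial time. Thus the solver distinguishes satisfiable from unsatisfiable $\varphi$, proving NP-hardness. One must verify that $|\F|$ and the output Gr\"obner basis size stay polynomial: with $N = \lceil 1/\ep\rceil$ fixed, $|\F|$ is $O((m+\ell)/\ep)$, and the reduced Gr\"obner basis of $\F'$ is polynomial because it lies in a quotient ring that is a product of finite-dimensional boolean quotients — so the ``time polynomial in the larger of $|\F|$ and the output size'' caveat is automatically satisfied and in fact gives nothing away.

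Finally I would observe that the argument uses nothing about the term order beyond the standard fact that for a zero-dimensional radical ideal over an infinite field, \emph{any} Gr\"obner basis either equals $\{1\}$ (empty variety) or permits point extraction (e.g. via the eigenvalue/companion-matrix method on the multiplication operators, or because each $x_i$ satisfies $x_i^2 = x_i$ in the quotient so its value is read off directly), which is why the lexicographic requirement can be dropped; I would state this as a short lemma. The main obstacle I anticipate is making the ``forced surviving copy'' counting argument airtight while keeping the degree bound at $3$ and not accidentally blowing up the number of generators or the Gr\"obner-basis size — in particular choosing the right balance of genuine clause generators versus duplicate/padding generators so that any $\ep$-fraction of survivors provably contains a complete copy of $\varphi$'s encoding, and double-checking that adding those padding generators does not enlarge the ideal in a way that reintroduces satisfiability. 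A secondary point to be careful about is the precise claim that point extraction from an arbitrary Gr\"obner basis is polynomial-time in the output size; I would lean on the finite-dimensionality of the quotient modulo the square-free boolean relations to make this routine.
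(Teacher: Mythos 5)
Your overall frame (encode 3SAT by degree-3 clause polynomials and beat the $\ep$ slack by building redundancy into the generating set) matches the paper's intent, but the specific amplification device you propose does not work, and the obstacle you flag at the end is fatal rather than a technicality. In the $\ep$-Fractional Gr\"obner problem the \emph{algorithm} chooses which $(1-\ep)$-fraction of generators to discard, and its budget is a constant fraction of the \emph{total} number of generators. With $N$ disjoint copies of the encoding, each generator duplicated $D$ times and any amount of extra padding or linking generators, spoiling one copy costs only $D$ deletions (remove every duplicate of a single clause polynomial in that copy), so spoiling all $N$ copies costs $ND$, while the budget is $(1-\ep)$ times the total size, which is at least $(1-\ep)N(m+\ell)D \ge ND$ as soon as the instance has at least $1/(1-\ep)$ generators per copy. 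More generally, in any scheme where each generator carries information about a single clause, the algorithm can always delete every generator attached to the least-replicated clause within budget, after which the surviving ideal no longer encodes that clause; the remaining variety can be nonempty even when $\Phi$ is unsatisfiable, and neither the $\{1\}$-test nor point extraction decides 3SAT. So the ``forced surviving copy'' counting cannot be made airtight by replication, duplication, or linking.

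The paper's proof avoids this by making every generator carry information about \emph{all} clauses: it forms $M \ge m/\ep$ generators $g_1,\dots,g_M$ as the entries of $A\cdot\mathbf{f}$, where $A$ is an $M\times m$ Vandermonde matrix built from $M$ distinct field elements. Since every $m\times m$ submatrix of $A$ is invertible, \emph{any} $m$ of the $g_k$ (hence any surviving $\ep$-fraction) have exactly the same common zeros as $\{f_1,\dots,f_m\}$, so the surviving ideal is trivial iff $\Phi$ is unsatisfiable, no matter which generators were discarded. This is also where the infinite-field hypothesis is genuinely used --- one needs $M$ distinct elements with $M$ growing with the instance --- not in the classical satisfiability-versus-variety equivalence, which holds over any field for this encoding (and note the paper does not include the generators $x_i^2-x_i$). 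Finally, the paper never extracts a point in this theorem: the reduction only tests whether the output basis is $\{1\}$, which is exactly why the ``time polynomial in the output size'' and ``any term order'' strengthenings come for free; your route additionally leans on an unsubstantiated claim that the reduced Gr\"obner basis of the surviving subsystem has polynomially bounded size, which you would otherwise have to justify.
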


\begin{thm}\label{thm:main2}
The $\ep$\nobreakdash-Fractional Gr\"obner problem is NP-hard for any $\ep > 7/8$ (for any field $\K$). This statement holds true even if each element of $\F$ contains at most 3 variables and even when $\mathcal{F}$ has maximum degree 3.
\end{thm}

%Next, we state results for the Restricted $\ep$\nobreakdash-Fractional Gr\"obner problem.  Note that Theorem \ref{extfrachard} holds even when $\F$ has maximum degree 3 and achieves a stronger parameter, while Theorem \ref{extfracharddeg2} holds even when $\F$ has maximum degree 2 but achieves a slightly weaker parameter. 

In Section \ref{sec:restricted} we show that the Restricted $\ep$\nobreakdash-Fractional Gr\"obner problem is NP-hard for every $\ep>7/10$, assuming the field does not have characteristic 2. This result holds even if polynomials are constrained to have at most 3 variables and degree at most 3. A similar result holds for $\ep>4/5$ for any field and with polynomials constrained to have degree at most 2.

%Old version:
%\begin{thm}
%\label{extfrachard}
%Assume that we are working over a polynomial ring $\mathbb{K}[x_1,x_2,x_3...x_n]$. For any $\epsilon>0$: there is no polynomial-time algorithm $\mathcal{A}$ that solves the $(7/10+\epsilon)$\nobreakdash-Fractional Gr\"{o}bner Basis Problem with respect to any lexicographic order (unless P = NP). This statement holds even when $\mathcal{F}$ has maximum degree 3, and each polynomial from $\mathcal{F}$ contains at most 3 variables.
%\end{thm}

%New version
\begin{thm}
\label{extfrachard}
% Assume that we are working over a polynomial ring $\mathbb{K}[x_1,x_2,x_3...x_n]$.
%Where $\mathbb{K}$ contains...? It's not quite any field bc of coeffs needed in the reduction
Assuming $\K\ne \mathbb{F}_2$, the Restricted $\ep$\nobreakdash-Fractional Gr\"{o}bner Basis Problem is NP-Hard for any $\ep>7/10$. This statement holds true even when each polynomial from $\mathcal{F}$ contains at most 3 variables, and even when $\mathcal{F}$ has maximum degree 3.
\end{thm}

%OLD Version
%\begin{thm}
%\label{extfracharddeg2}
%Assume that we are working over a polynomial ring $\mathbb{K}[x_1,x_2,x_3...x_n]$. For any $\epsilon>0$: there is no polynomial-time algorithm $\mathcal{A}$ that solves the $(4/5+\epsilon)$\nobreakdash-Fractional Gr\"{o}bner Basis Problem with respect to any lexicographic order (unless P = NP). This statement holds even when $\mathcal{F}$ has maximum total degree 2, and each polynomial from $\mathcal{F}$ contains at most 3 variables.
%\end{thm}

\begin{thm}
\label{extfracharddeg2}
% Assume that we are working over a polynomial ring $\mathbb{K}[x_1,x_2,x_3...x_n]$. %Where $\mathbb{K}$ contains...? It's not quite any field bc of coeffs needed in the reduction
The Restricted $\ep$\nobreakdash-Fractional Gr\"{o}bner Basis Problem is NP-Hard for any $\ep>4/5$ (and for any field $\K$). This statement holds true even when each polynomial from $\mathcal{F}$ contains at most 3 variables, and even when $\mathcal{F}$ has maximum degree 2.
\end{thm}

Finally, in Section \ref{sec:partial} we recall the notions of approximate hardness posed in De Loera, et al.~\cite{deloera}.  The authors define the \textit{Weak $c$\nobreakdash-Partial Gr\"obner} problem, in which the algorithm may ignore any $c$ variables (and all generators containing them), and the \textit{Strong $c$\nobreakdash-Partial Gr\"{o}bner} problem in which the algorithm may ignore $c$ independent sets of variables, where an \emph{independent set} of variables is a set such that no pair co-occur in any generator.

\begin{defn}
Given a set $\F$ of polynomials on a set $X$ of variables, we say that $X'\subset X$ is an \emph{independent set}  of variables if no two variables from $X'$ appear in a single element of $\F$.
\end{defn}

\begin{defn}[\textbf{Strong $c$\nobreakdash-Partial Gr\"{o}bner problem} - De Loera et al.~\cite{deloera}]
Given as input, a set $\F$ of polynomials on a set $X$ of variables and a parameter $c$, output the following:
\begin{itemize}
\item disjoint $X_1,\ldots,X_b\subseteq X$, such that $b\leq c$ and each $X_i$ is an independent set of variables,
\item a Gr\"{o}bner basis for $\langle \F_{in}\rangle$ over $X_{in}=X\bs (\cup_{i=1}^b X_i)$ (i.e., we have taken away at most $c$ independent sets of variables), where the monomial order of $X$ is restricted to a monomial order on $X_{in}$.
\end{itemize}
\end{defn}
%In both cases, the algorithm need only return a Gr\"{o}bner basis for the ideal corresponding to the remaining generators.

De Loera et al.~\cite{deloera} showed that for any system of polynomials subject to degree bound $k\geq 3$, the Strong $(2\lfloor\frac{k}{3}\rfloor-1)$\nobreakdash-Partial Gr\"{o}bner problem is NP-hard. That is, even giving an algorithm the freedom to choose to ignore $(2\lfloor\frac{k}{3}\rfloor-1)$ independent sets of variables (and all of the polynomials in which they appear) doesn't necessarily yield a polynomial system with an ideal for which Gr\"{o}bner basis computation is possible in polynomial time.

%In Section \ref{sec:results} we provide formal definitions and state our main results. 

% Following our proofs for the \emph{$\ep$\nobreakdash-Fractional Gr\"{o}bner} problem, we show that the problem is NP-hard for general $c$, even restricted to simple polynomial systems.  We also present a reduction from a chromatic number decision result of Lund and Yannakakis \cite{lund}, showing the rich connections between Gr\"obner basis computation and graph coloring.

%We also provide improved results for the Strong $c$\nobreakdash-Partial Gr\"{o}bner problem of De Loera et al.~\cite{deloera}. Let us first recall the statement of the problem.

%In this sense, \cite{deloera} demonstrated that Gr\"{o}bner basis computation is ``robustly hard."

We show that a rather stronger statement may be proved with a simpler (though less interesting) construction, in which we follow the proof of De Loera et al.~\cite{deloera} showing NP-hardness for the Weak $c$\nobreakdash-Partial Gr\"obner problem. Further, while the result in ~\cite{deloera} required a lexicographic term order, the following result holds true for any polynomial term order.

\begin{thm} 
The Strong $c$\nobreakdash-Partial Gr\"obner problem is NP-hard for every $c$, even for polynomials of degree 3.  This still holds true even if we require only an algorithm that runs in time polynomial in the \emph{size of the Gr\"{o}bner basis} it outputs.
\label{thm:stronger}
\end{thm}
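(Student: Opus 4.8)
The plan is to follow the proof that the \emph{Weak} $c$-Partial Gr\"obner problem is NP-hard for every $c$ --- which uses $c+1$ disjoint copies of a 3SAT-encoding --- and then add a small gadget to defeat the stronger adversary of the \emph{Strong} problem. First I would reduce from 3SAT exactly as in the proof of Theorem~\ref{thm:main1}: given a 3-CNF formula $\Phi$ with clauses $C_1,\dots,C_m$ on $y_1,\dots,y_n$, form the degree-3 polynomials $f_1,\dots,f_m$ (each a product of three linear terms $x_j$ or $x_j-1$), so that $\langle f_1,\dots,f_m\rangle$ is trivial exactly when $\Phi$ is unsatisfiable. Now take $c+1$ disjoint copies on disjoint variable sets $X^{(0)},\dots,X^{(c)}$ with $X^{(t)}=\{x_1^{(t)},\dots,x_n^{(t)}\}$, writing $f_i^{(t)}$ for the $i$-th clause polynomial of copy $t$ (for $c=0$ there is nothing to delete and this is just ordinary Gr\"obner basis computation, so assume $c\ge 1$). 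Since the copies live on disjoint variables, the common zero set of $\{f_i^{(t)}\}_{i,t}$ is a product of the $c+1$ copy-varieties, so the ideal it generates is trivial iff \emph{some} copy is trivial iff $\Phi$ is unsatisfiable.

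The $c+1$ copies give robustness: in the Weak problem, deleting $c$ variables destroys at most $c$ copies and leaves one intact. The difficulty specific to the Strong setting is that a single \emph{independent set} of variables need not be confined to one copy --- variables of different copies never co-occur in any $f_i^{(t)}$, so one independent set could touch every copy at once. To rule this out I would add, for each pair $s<t$, the linking polynomial
$$P_{s,t}\;=\;\Bigl(\textstyle\sum_{a}x_a^{(s)}\Bigr)\Bigl(\textstyle\sum_{b}x_b^{(t)}\Bigr)\,w_{s,t},$$
where $w_{s,t}$ is a fresh variable. Every monomial of $P_{s,t}$ is $x_a^{(s)}x_b^{(t)}w_{s,t}$, so $P_{s,t}$ has degree $3$. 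Setting every $w_{s,t}=0$ satisfies all $P_{s,t}$, so the enlarged system has a common zero iff all $c+1$ copies are simultaneously satisfiable iff $\Phi$ is satisfiable; hence the enlarged ideal is still trivial iff $\Phi$ is unsatisfiable. But now \emph{every} pair of ``copy variables'' co-occurs in some $P_{s,t}$: two variables of the same copy $t$ both appear in a factor of any $P_{s,t}$ or $P_{t,u}$ (here $c\ge1$ is used), and two variables of different copies appear together in the corresponding $P_{s,t}$. Therefore any independent set of variables of this system contains \emph{at most one} copy variable.

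Consequently, for any legal choice of $\le c$ independent sets $X_1,\dots,X_b$ to delete, the deleted variables include at most $c$ copy variables, lying in at most $c$ of the $c+1$ copies; so some copy $t^*$ has none of its variables deleted, and since each $f_i^{(t^*)}$ involves only variables of $X^{(t^*)}$, every $f_i^{(t^*)}$ survives into $\Fin$. Hence $\langle f_1^{(t^*)},\dots,f_m^{(t^*)}\rangle\subseteq\langle\Fin\rangle\subseteq\langle\F\rangle$, and both the outer ideals are trivial iff $\Phi$ is unsatisfiable, so $\langle\Fin\rangle$ is trivial iff $\Phi$ is unsatisfiable --- regardless of which independent sets the algorithm chooses. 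A Gr\"obner basis of $\langle\Fin\rangle$ reduces $1$ to $0$ exactly in the unsatisfiable case, so inspecting the algorithm's output decides 3SAT in polynomial time, giving NP-hardness. For the strengthening to algorithms running in time polynomial in the size of the output basis, I would argue exactly as in Theorem~\ref{thm:main1}: when $\Phi$ is unsatisfiable the basis to be produced is $\{1\}$, so one runs the purported algorithm for the corresponding (polynomially bounded) number of steps and declares $\Phi$ satisfiable if it has not by then output a basis equivalent to $\{1\}$.

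I expect the single genuinely new step --- the only place this goes beyond the Weak-problem argument --- to be the design of the linking polynomials: they must simultaneously (a) have degree $3$, (b) leave the ``trivial $\iff$ unsatisfiable'' equivalence intact (forced by keeping the $w_{s,t}$ free, so that the zero set is unchanged), and (c) create enough co-occurrence that no independent set of variables can straddle two copies. The choice above does all three; verifying that $\langle\Fin\rangle$ is trivial precisely in the unsatisfiable case, and that the reduction runs in polynomial time, is then routine bookkeeping of the sort already carried out for Theorems~\ref{thm:main1} and~\ref{thm:main2}.
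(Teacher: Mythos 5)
Your proposal is correct and takes essentially the same approach as the paper: reduce from 3SAT via the clause polynomials, take $c+1$ disjoint copies as in the Weak-problem proof, add a linking gadget so that every independent set of variables can touch at most one copy, and then observe that deleting $c$ independent sets leaves some copy intact, so the Gr\"obner basis is $\{1\}$ exactly when $\Phi$ is unsatisfiable (with the same ``run until the time needed to output $\{1\}$'' trick for the output-size-polynomial strengthening). The only difference is the gadget: the paper adds a single \emph{linear} polynomial $g = x + \sum_{i,j} x_{i,j}$ in one fresh variable, which makes every independent set a singleton, whereas your $\binom{c+1}{2}$ degree-3 polynomials $P_{s,t}$ with fresh variables $w_{s,t}$ achieve the same effect with somewhat more machinery.
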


The reduction used to prove this result is perhaps not as instructive as that in De Loera et al.~\cite{deloera}. We therefore also provide an adaptation of the original method. Our statement below gives a stronger parameter for $c$ than in De Loera et al.~and furthermore does not require lexicographic order.

\begin{thm}\label{prop:lundgroebner}
For every constant $h>0$, there is a constant $k_h$ such that the following problem is NP-hard: Given a polynomial system of maximum degree $k_h$, solve the Strong $(hk_h-1)$\nobreakdash-Partial Gr\"{o}bner problem for some term order. This still holds true even if each generator in $\F$ contains at most 2 variables.
\end{thm}

The more interesting construction which proves Theorem \ref{prop:lundgroebner} can be adapted in a straightforward way to cast results very much like Theorem 9 as consequences of recent (unresolved) conjectures related to the Unique Games Conjecture \cite{Khot02}.

Our theorems apply even for polynomial systems that belong to a highly restricted classes.  In the case of Theorem \ref{thm:main2}, none of the generators contains more than 3 variables, and all coefficients come from the set $\{0,-1\}$. Qualitatively, our robust hardness results hold true even for very simple $\F$. In fact, we can even gain another significant restriction on $\F$ without compromising our robust hardness result.

\begin{thm}[Extension of Theorem \ref{thm:main2}] \label{thm:main3}
%Assume that we are working over a polynomial ring $\K[x_1,\ldots,x_n]$.
The $\ep$\nobreakdash-Fractional Gr\"{o}bner problem is NP-hard for every $\ep > 7/8$, even when the input contains only polynomials of the form $x_ix_jx_k$ or $(x_i-1)(x_j-1)(x_k-1)$.
\end{thm}

\noindent\textbf{A remark on graphical representations of $\F$ and $\F'$.}
Graphical representations of polynomial systems have been previously studied in connection with efficient Gr\"obner basis computation  (see \cite{parrilo} and \cite{deloera}). 
Given a polynomial system $\F$, construct a multigraph $G_{\F}$ as follows. For each variable $x_i$ create a node $n_i$. For each polynomial, $f_i\in \F$, create a clique on the nodes corresponding to variables that appear in $f_i$.
It has been shown that when $G_{\F}$ satisfies very special conditions, a Gr\"obner basis for $\langle\F\rangle$ can be efficiently computed. Could it be that a much wider class of polynomial systems with simple graphical representations generate ideals that admit efficient Gr\"obner basis computation? 

Our results suggest insight in the negative direction. In our Theorems \ref{thm:main1}, \ref{thm:main2}, \ref{extfrachard}, and \ref{extfracharddeg2}, the graphical representation of the input $\F$ (prior to discarding any polynomials) is already remarkably simple: each polynomial of $\F$ gives rise to a single triangle in $G_{\F}$. That is, $G_{\F}$ is the union of a ``small number'' of triangles.
Discarding $\F\backslash \F'$ produces a subgraph, $G_{\F'}$. For our unrestricted problem (Definition \ref{def:fractional}), our theorems describe the ability to select arbitrary triangles for removal from the graphical representation of $\F$. For our restricted problem (Definition \ref{def:restricted}), choosing variables to ignore corresponds to node removals that force the removal of certain triangles from $G_{\F}$. Qualitatively, our results say that any efficient method for choosing a constant fraction of the triangles to remove from (an already quite simple) $G_{\F}$ will not guarantee efficient computation of a Gr\"{o}bner basis for the ideal generated by the remaining subsystem with graphical representation $G_{\F'}$.\\

%\begin{thm} \label{Khotcond}(Hardness conditioned on Khot)If Khot's $2\leftrightarrow 2$ conjecture is true, then for arbitrary constant $k$, the following problem is NP-Hard:
%solve the Strong $k$-partial Gr\"{o}bner Problem for some lexicographic order.
%This statement holds even if $\mathcal{F}$ has maximum degree 4.
%\end{thm}

%\begin{thm} \label{Dinurcond}(Hardness conditioned on Dinur et al.~\cite{dinur})
%If the $\mytie$-shaped conjecture of Dinur et al.~is true, then for arbitrary constant $k$, the following problem is NP-Hard:
%solve the Strong $k$-partial Gr\"{o}bner Problem for some lexicographic order.
%This statement holds even if $\mathcal{F}$ has maximum degree 3.
%\end{thm}

\section{Proofs for the $\ep$\nobreakdash-Fractional Gr\"obner problem.}
\label{sec:fractional}

Our proofs of Theorems \ref{thm:main1} and \ref{thm:main2} are by reduction from one of the most famous problems in combinatorial optimization, 3SAT. Results are known about the hardness of producing approximate solutions for many variants of this problem, even when the inputs are highly restricted. Such results play a key role in our arguments in this section. Recall the following problem statements:

\begin{defn}
Let $\Phi$ be a Boolean formula over a set of variables $\{y_1,\ldots,y_n\}$.  Suppose that $\Phi$ is in 3\nobreakdash-conjunctive normal form (3\nobreakdash-CNF). In other words, $\Phi$ is the conjunction of a set $\C$ of clauses, where each clause is the disjunction of three literals (of the form either $y_i$ or $\neg y_i$).  

\begin{itemize}
\item \textbf{3\nobreakdash-Satisfiability} (3SAT): Determine whether there exists a truth assignment to $\{y_i\}$ satisfying the formula $\Phi$.

\item \textbf{Maximum 3\nobreakdash-Satisfiability} (MAX-3SAT): Output a truth assignment for $\{y_i\}$ that satisfies the maximum fraction of the clauses in $\C$.
\end{itemize}
\end{defn}

%\noindent Among NP-Complete problems, 3SAT is one of the most famous. %Is this from the orig Karp paper? or was that for less restricted clause forms?

\begin{proof}[Proof of Theorem \ref{thm:main1}]
We prove the theorem by reduction from 3SAT.  Let $\Phi$ be a 3\nobreakdash-CNF Boolean formula in variables $y_1,\ldots,y_n$, with clauses $C_1, \ldots, C_m$. For each $C_i$, define a polynomial $f_i(x_1,\ldots,x_n)$ as the product of three terms chosen as follows: For each positive literal $y_j$ in $C$, include the term $x_j-1$; and for each negative literal $\neg y_j$, include the term $x_j$. Thus, for example, the clause $C_i = y_3\vee \neg y_6\vee y_{11}$ gives us the function $f_i = (x_3 - 1)(x_6)(x_{11} - 1)$. This gives a set of $m$ polynomials of maximum degree 3.

%Should m and n be switched below or above?
Observe that truth assignments satisfying $\Phi$ correspond to simultaneous roots of the system $\{f_i\}$, where $y_j = \texttt{True}$ corresponds to $x_j = 1$, and $y_j = \texttt{False}$ to $x_j = 0$. In particular, this means that $\Phi$ is satisfiable if and only if the set $\{f_i\}$ defines a nontrivial variety.

Pick $M \ge m/\ep$, and pick $M$ distinct elements $a_1,\ldots,a_M \in \K$.  Construct the $M \times m$ matrix:

$$A = \left[\begin{array}{ccccc}
1 & a_1 & a_1^2 & \cdots & a_1^{m-1} \\
1 & a_2 & a_2^2 & \cdots & a_2^{m-1} \\
\vdots & \vdots & \vdots & \ddots & \vdots \\
1 & a_M & a_M^2 & \cdots & a_M^{m-1}
\end{array}\right].$$

Observe that any $m$ distinct rows $k_1,\ldots,k_m$ of $A$ form a Vandermonde matrix $A'$, which has determinant $\prod_{1\le r, s\le n} (a_{k_r} - a_{k_s})$.  This determinant is nonzero by our choice of $a_1, \ldots, a_m$, and hence is $A'$ invertible.

Let $\textbf{f}$ denote the vector $(f_i)$, and define the polynomials $g_1, \ldots, g_M$ as the entries of the vector $A\cdot \textbf{f}$.  For $S$ a subset of $\{1,2,\ldots, M\}$, let $A_S$ denote the $|S|\times m$ submatrix of $A$ obtained by taking exactly those rows corresponding to elements of $S$, and let $I_S$ be the ideal generated by $\{g_k\}_{k\in S}$.

Suppose that $S$ has cardinality at least $m$. Then, since each $m\times m$ submatrix of $A_S$ is invertible, the only solution to the vector equality $A\cdot \textbf{f} = \textbf{0}$ is when the vector $\textbf{f}$ is identically zero.  By the definition of $g_k$, then, simultaneous roots of $\{g_k\}_{k\in S}$ must correspond to simultaneous roots of $\{f_1,\ldots,f_m\}$.  Thus, $I_S$ is nontrivial exactly when $\{f_i\}$ defines a nontrivial variety, and hence exactly when $\Phi$ is satisfiable.  To complete the argument, note that $I_S$ is trivial exactly when its Gr\"obner basis is $\{1\}$.

Since this argument holds for every subset $S$ with $|S|\ge \ep\cdot M \ge m$, we conclude that solving the $\ep$\nobreakdash-Fractional Gr\"obner Basis Problem for input system $\{g_i\}$ permits a polynomial-time reduction\footnote{Notice that bit-wise representations of entries in $A$ have polynomial size.} for 3SAT and hence is NP-hard.  Moreover, our reduction uses 
%(relatively sparse) % the g_i don't seem sparse?
degree-3 generators and merely requires us to determine whether some large-enough subset of the generators from $\{g_i\}$ has Gr\"obner basis $\{1\}$ (or not). Therefore, if there existed an algorithm $\mathcal{A}$ for the $\ep$\nobreakdash-Fractional Gr\"obner problem that ran in time polynomial in the size of Gr\"obner basis to be outputted, then we could simply run $\mathcal{A}$ until the time required for the output $\{1\}$.  If $\mathcal{A}$ did not terminate by this time or outputted a different Gr\"obner basis, then we would conclude that $\Phi$ has no solution.  This shows that such an algorithm $\mathcal{A}$ is possible only if P=NP. 
\end{proof}

%(Need to understand how elimination order is being used... Was it important to have the univariate polynomial to get started?  Need to read the Cox Little Oshea book.)

We prove Theorem \ref{thm:main2} by reduction from MAX-3SAT, using H{\aa}stad's celebrated hardness-of-approximation result for satisfiable instances \cite{hast01}.  We use the fact that under a lexicographic order, possession of a Gr\"obner basis  will allow efficient computation of a point in the variety of the corresponding polynomial system by iterative elimination of one variable at a time.

\begin{defn}
An algorithm $\A$ is called a \emph{$\beta$\nobreakdash-approximation algorithm} for an optimization problem if $\A$ returns a solution whose value is within a $\beta$-multiplicative factor of the optimal value.
\end{defn}

\begin{thm}[Theorem 6.5 in H{\aa}stad \cite{hast01}] \label{hastad}
Assuming that P$\neq$NP, and $\delta>0$, there is no polynomial-time $(7/8+\delta)$-approximation algorithm for MAX-3SAT.  This holds even when instances are guaranteed to be satisfiable - that is, when \emph{all} clauses of the Boolean formula can be satisfied.
\end{thm}

\begin{proof}[Proof of Theorem \ref{thm:main2}]
Assume towards contradiction that $\A$ is some polynomial-time algorithm that solves the $\ep$\nobreakdash-Fractional Gr\"obner problem for some $\ep > 7/8$.  Let $\Phi$ be a satisfiable 3\nobreakdash-CNF Boolean formula in variables $y_1,\ldots,y_n$, with clauses $\C = \{C_1, \ldots, C_m\}$. For each $C_i$, define a polynomial $f_i(x_1,\ldots,x_n)$ as in the proof of Theorem \ref{thm:main1}. As above, $\Phi$ is satisfiable if and only if the set $\F = \{f_i\}$ defines a nontrivial variety in the variables $X = \{x_1, \ldots, x_n\}$. We now run $\A$ on $\F$. Let $\Fin$ denote the set of generators that $\A$ picks from within $\F$.  Thus, $|\Fin| \ge \ep \cdot |\F|$. The algorithm $\A$ must compute a Gr\"{o}bner basis $\G$ for the ideal generated by the polynomials in $\Fin$.

We claim that from $\G$ we can reconstruct a solution in the variety of $\Fin$.  By the Elimination Theorem (Theorem 2 in \S3.1 of \cite{cox}), the set $\G_k = \G\cap \K[x_k, x_{k+1}, \ldots, x_n]$ is a Gr\"obner basis for $\V(\Fin) \cap \K[x_k, x_{k+1}, \ldots, x_n]$. Given $a\in \V(\Fin)$, let $\pi_k(a)$ be the element of $\K[x_k, x_{k+1}, \ldots, x_n]$ formed by projecting $a$. By the Closure Theorem (Theorem 3 in \S3.2 of \cite{cox}), the variety of solutions to $\G_k$ is the smallest affine variety containing $\pi_k(a)$ for every $a\in \V(\Fin)$. Note that by our construction of $\F$, the variety $\V(\Fin)$ is the union of a finite number of sets $V_j$, where each $V_j$ is the product of sets of the form $\{1\}$ or $\{0\}$ or the entire completion of $\K$.  This is because, if we fix all the variables but one, the remaining variable is either constrained to $1$ or $0$ or is completely unconstrained.

Hence, every projection of $\V(\Fin)$ is itself an affine variety; thus, every solution to $\G_k$ extends to a solution to $\G$. Therefore, in order to reconstruct a solution $a\in \V(\Fin)$, we can iteratively extend a partial solution in $\V(\G_k)$. Specifically, we pick some value of $x_k$ that solves $\G_k$, given that we have already picked values for $x_{k+1}, \ldots, x_n$.

This solution $a$ is a vector of length $n$ which is a mutual zero of all polynomials in $\Fin$. Each entry in the vector corresponds to some variable in our original MAX\nobreakdash-3SAT instance: if $a_j = 1$, we assign the corresponding variable $y_j$ to be true. If $a_j = 0$, we assign $y_j$ to be false. For values of $a_j$ other than $0$ or $1$ an arbitrary assignment is made for $y_j$. By construction, the fact that $a$ causes every polynomial in $\Fin$ to evaluate to 0 means that every corresponding clause evaluates as true under the truth assignment constructed so far, so that the number of true clauses is at least $|\Fin| \ge \ep\cdot |\F| = \ep \cdot |\C|$.

Thus, our constructed truth assignment is guaranteed to satisfy more clauses than Theorem \ref{hastad} permits.  Moreover, our reduction uses generators that have maximum degree 3 and contain at most 3 variables per generator (so that $\F$ is relatively sparse).  Since the reduction we have described is clearly polynomial-time (assuming that $\A$ is itself polynomial-time), we obtain a contradiction and conclude that the $\ep$\nobreakdash-Fractional Gr\"obner problem is NP-hard, as desired. 
\end{proof}

\noindent\textbf{An extension for very simple $\F$: Theorem \ref{thm:main3}}. 
%Note that in our proofs of Theorems \ref{thm:main1} and \ref{thm:main2}, the constructed polynomial systems belong to a highly restricted class: the maximum degree is 3 and no variable has degree greater than 1 in any term (if we remove trivial clauses including both a positive literal and its negation).  In the case of Theorem \ref{thm:main2}, none of the generators contains more than 3 variables, and all coefficients come from the set $\{0,-1\}$. Qualitatively, our robust hardness results hold true even for very simple $\F$.
We can, in fact, gain a significant additional restriction on $\F$ (without losing anything in our robust hardness result) by using a more recent inapproximability result of Guraswami and Khot \cite{guruswamikhot} for a specialized variant of MAX\nobreakdash-3SAT known as ``Max Non-Mixed Exactly 3SAT." In this variant, each clause must contain exactly 3 literals, and each clause either has all three literals in positive form or all three literals in negated form (there is ``no mixing" of positive-form and negated-form literals within clauses). Guraswami and Khot match H{\aa}stad's result for the general case: even this specialized variant is NP-hard to approximate within multiplicative factor better than $7/8$, and this is true even for instances guaranteed to be satisfiable. Our reduction works in the same way as before, but the polynomial system constructed from the arbitrary Max Non-Mixed Exactly 3SAT instance has an even simpler form. Thus, for $\ep > 7/8$, the $\ep$\nobreakdash-Fractional Gr\"{o}bner problem is NP-hard even for this more restricted class of polynomial systems.

Further,  shifting to the Restricted $\epsilon$\nobreakdash-Fractional Gr\"{o}bner Model, even for the simple clause form in Theorem \ref{thm:main3}, the result can be improved to $\ep > 3/4$. Since this parameter is weaker than our parameter in Theorem \ref{extfrachard}, we omit an explicit proof. 

\section{Proofs for the Restricted $\ep$\nobreakdash-Fractional Gr\"{o}bner problem}
\label{sec:restricted}

Compared with the $\ep$\nobreakdash-Fractional Gr\"{o}bner problem, our definiton of the  Restricted $\ep$\nobreakdash-Fractional Gr\"{o}bner problem  imposes additional structure on the set of generators that the algorithm can ignore from $\F$. In this section, we give proofs of Theorems 5 and 6 that exploit this additional constraint on structure to give reductions from other logical satisfiability problems (where the hardness of approximation bounds are lower than for MAX-3SAT). 
Reductions in this section will be more involved because the form of the polynomials that encode \textit{Not-2} and \textit{OXR} clauses don't naturally force the variety of the constructed polynomial system to be contained in $\{0,1\}^n$ (as we had for disjunctions in Section \ref{sec:fractional}). 

%Our polynomial encoding of disjunction clauses in Section \ref{sec:fractional} naturally forced that points in the variety of the constructed systems were from $\{0,1\}^n$. 

For a polynomial system $\F$ and subset of variables $Y$, let $\F_Y$ denote the subset of polynomials from $\F$ which contain at least one variable from $Y$.
Referring to Definition \ref{def:restricted}, we write  $X''=X\backslash X'$ to correspond to a set of variables chosen by the algorithm to be ignored. The set of polynomials containing at least one variable from $X''$, denoted by $\F_{X''}$, is ignored, and the algorithm need only compute a Gr\"{o}bner basis for the remaining set of polynomials $\mathcal{F}'$ (the set of polynomials that each contain only \textit{retained variables} from $X'$). 
%When $\ep=1$, this is the standard problem of finding a Gr\"{o}bner basis.

%By reducing from the \textit{Max-3SAT Problem for satisfiable instances}, Rolnick and Spencer showed that the  $(\frac{3}{4}+\epsilon)$\nobreakdash-Fractional Gr\"{o}bner problem is NP-Hard for lexicographic orders.

We prove Theorem \ref{extfrachard} by reducing from the \textit{Max-Not-2 Problem for satisfiable instances of arity 3}. The input to the Max-Not-2 Problem is a set of logical predicates $\mathcal{P}$ over a set of literals $\mathcal{L}$. Specifying \textit{arity 3} means that each predicate contains at most 3 signed literals (a ``signed literal" is a literal in either negated or positive form), e.g. $(l_i,l_j,\neg l_k)$ where $l_i, l_j,l_k\in \mathcal{L}$. For a truth assignment to the literals, a predicate is ``satisfied" if the number of its signed literals that are true is not exactly 2. If exactly 2 of its signed literals are true, then the predicate is not satisfied.\footnote{For example, the predicate $(l_i,l_j,\neg l_k)$ is satisfied by a truth assignment where $l_i$ is true, $l_j$ is true, and $l_k $ is false (all three of the signed literals in the predicate are true for this truth assignment). On the other hand, consider a truth assignment in which $l_i$ is true, $l_j$ is true, and $l_k$ is true: for this truth assignment exactly 2 of the signed literals in the predicate are true, so the predicate is not satisfied.} The objective is to compute a truth assignment that satisfies the highest possible fraction of predicates in $\mathcal{P}$. When we study the problem \textit{for satisfiable instances} we are guaranteed that some truth assignment for $\mathcal{L}$ satisfies every predicate in $\mathcal{P}$.

H{\aa}stad recently showed that fair coin-flipping gives a tight approximation guarantee for this problem:
\begin{thm}[H{\aa}stad \cite{hast14}]\label{hastadnot2}
For any $\delta>0$, given a satisfiable instance of Max-Not-2 of arity 3, there is no polynomial-time algorithm to find a truth assignment that satisfies a $(\frac{5}{8}+\delta)$-fraction of the predicates (unless $P=NP$). 
\end{thm}
\vspace{-1mm}

Before the proof of Theorem \ref{extfrachard}, we recall the key fact our reduction will invoke from computational algebra. Given a Gr\"{o}bner basis for  $\langle\mathcal{F}\rangle$ with respect to a lexicographic order, if the variety of  $\mathcal{F}$ (the set of mutual roots of all polynomials in $\mathcal{F}$) is finite, then a point in the variety of $\mathcal{F}$ (a mutual root of all polynomials in $\mathcal{F}$) can be computed efficiently by iteratively eliminating the variables one at a time. These classic results in elimination theory are covered in the textbook of Cox, Little and O'Shea \cite{cox}.

\vspace{2mm}
\begin{proof}[Proof of Theorem \ref{extfrachard}] Assume, for  contradiction, that the $\mathcal{A}$ asserted in Theorem \ref{extfrachard} does exist with $\ep>7/10$. Define $\gamma=\ep-7/10$ so that $\gamma>0$. Given an arbitrary satisfiable input $(\mathcal{P}, \mathcal{L})$ of the \textit{Max-Not-2 Problem of arity 3} we compute a truth assignment of forbidden quality in polynomial time as follows. Our assignment will be determined over the course of three stages: instance preprocessing, polynomial encoding, and supplemental random assignment.

\vspace{2mm}

\noindent \textbf{Stage 1. Instance Preprocessing.} The form of a Not-2 predicate sometimes unequivocally dictates a literal's truth value in all satisfying assignments. For example, $(l_i,l_i,\neg l_i)$ must have $l_i$ false in every satisfying assignment. Further, if some literal appears in only one Not-2 predicate, then manipulating that literal's truth value can always satisfy the predicate (regardless of the truth values of all other literals).  

In Appendix 1, we use simple arguments like these to prove that an arbitrary satisfiable \textit{Max-Not-2} instance may be preprocessed (fixing some literals to be true/false and removing some predicates) so that
WLOG the remaining instance is still satisfiable, an $\alpha$-approximate truth assignment for the updated instance is at minimum $\alpha$-approximate for the original Max-Not-2 instance, and the following two convenient properties hold:

\vspace{1mm}

\noindent \textbf{Property 1.} \textit{Any predicate $p'\in \mathcal{P}$ that has multiple occurrences of the same literal must have a very specific form: either $p'$ contains two identical signed literals and a third signed literal corresponding to a different index, or  $p'$ contains two identical signed literals and a third literal whose truth value has been permanently fixed to either false or true.}
\vspace{1mm}
%These forms correspond to sub-cases 2(c) and 2(d) above: in all other sub-cases, $p$ was removed from $\mathcal{P}$.}\\

\noindent \textbf{Property 2.} \textit{After the updates in Stage 1, Every literal $l\in \mathcal{L}$ appears in some form (either negated or positive) in at least two predicates from $\mathcal{P}$.}

\vspace{1mm}

\noindent 
Thus, to obtain a contradiction, it is sufficient to show that for a preprocessed satisfiable Max-Not-2 instance (where Properties 1 and 2 hold)  we can satisfy a $(\frac{5}{8}+\epsilon)$ fraction of the predicates. First, we derive an immediate consequence of Property 2 that will be used in Stage 2. Later, in Stage 3, we will use Property 1 to analyze a final stage of supplemental random truth assignment. 
\begin{lem} \label{twofifths} An instance of \textit{Max-Not-2} of arity $3$ over literals $\mathcal{L}$ and predicates $\mathcal{P}$ for which each literal appears in at least 2 predicates has $|\mathcal{P}|\geq \frac{2}{5}(|\mathcal{P}|+|\mathcal{L}|)$.
\end{lem}

\begin{proof}[Proof of Lemma] Since there are $|\mathcal{L}|$ literals, and each appears at least twice (from Property 2), then there must be at least $2|\mathcal{L}|$ appearances of literals. Each predicate contains at most 3 appearances of literals, so at minimum there are $2|\mathcal{L}|/3$ predicates in $\mathcal{P}$:
\vspace{-4mm}
\begin{align*}
|\mathcal{P}|&\geq \frac{2}{3} |\mathcal{L}|\\
|\mathcal{P}|+\frac{2}{3}|P|&\geq \frac{2}{3} |\mathcal{L}|+\frac{2}{3}|P|\\
\frac{5}{3}|P|&\geq \frac{2}{3} (|\mathcal{L}|+|P|)\\
|\mathcal{P}|&\geq \frac{2}{5}(|\mathcal{P}|+|\mathcal{L}|).\text{\qedhere}
\end{align*}
\end{proof}

\noindent \textbf{Stage 2. Polynomial Encoding and Gr\"{o}bner-based partial assignment.} 
First we take a convenient (and equivalent) algebraic view of Max-Not-2 predicates.  Consider each literal $l_i$ to be a $\{0,1\}$ variable $x_i$ (where $x_i=1$ corresponds to $l_i$ true, and $x_i=0$ corresponds to $l_i$ false). Translate each predicate into a sum:  if a predicate contains a signed literal in positive form, a positive copy of the corresponding variable is added. If a predicate contains a signed literal in negated form, a term is added in which the corresponding variable is subtracted from 1. For example, the predicate $(l_i,l_j,\neg l_k)$ becomes the sum $x_i+x_j+(1-x_k)$. It is easy to check that the original predicate is satisfied exactly when its corresponding sum is not 2 (and hence the sum has total value 0, 1, or 3). We will say each predicate has at most three \emph{acceptable totals}, and exactly one \emph{unacceptable total}.\footnote{If the predicate has fewer than three signed literals, the number of achievable acceptable totals maybe be less than 3.}

Now, define a polynomial system based on $(\mathcal{P}, \mathcal{L})$ as follows. Create a variable $y_i$ corresponding to the $i$th literal of $\mathcal{L}$. Denote this new set of variables by $Y$. These $y_i$ will replicate the $x_i$ in the algebraic view of Max-Not-2: for each $i\in \{1,2,...,|\mathcal{L}|\}$ create a polynomial $y_i(1-y_i)$. This gives a set of $|\mathcal{L}|$ \emph{literal polynomials} whose mutual roots are exactly $\{0,1\}^{|\mathcal{L}|}$.

Next, create a polynomial corresponding to each predicate in $\mathcal{P}$. In the algebraic view of Max-Not-2, each predicate $p$ corresponds to a sum with at most 3 acceptable totals (some subset of $\{0,1,3\}$ gives the acceptable totals for which the sum corresponds to a satisfied predicate). 
Each acceptable total for the sum 
will be used to define a linear term, and the product of these linear terms will give the \emph{predicate polynomial} corresponding to $p$. Each linear term is $p$'s sum (with the fixed variables from Stage 1 substituted in, e.g. $l_i$ fixed true implies $y_i=1$) minus an acceptable total for the sum. For example, recall that the algebraic view of the Not-2 predicate $(l_i,l_j,\neg l_k)$ is the sum $x_i+x_j+(1-x_k)\neq 2$.
If none of $x_i, x_j, x_k$ were fixed in Stage 1, then this yields the following polynomial:
\[
\bigl(y_i+y_j+(1-y_k)-0\bigr))\bigl(y_i+y_j+(1-y_k)-1\bigr)\bigl(y_i+y_j+(1-y_k)-3\bigr)
\]
The first term in this product corresponds to acceptable total sum of 0, etc.
Since there are always at most 3 acceptable totals for $p$'s sum, the polynomial constructed is the product of at most 3 linear terms, and hence has degree at most 3. Since each predicate has at most 3 signed literals, each polynomial will contain at most 3 variables. When restricted to the mutual roots for the set of literal polynomials defined above, 0s for $p$'s polynomial correspond exactly to the cases in which $p$'s sum takes on an acceptable total. This encoding of the Not-2 requirement can fail only when some term in the product evaluates to 0 in a misleading way (when a sum is 2), which can only arise if $\mathbb{K}$ is $F_2$.

All properties still hold when some variables in $p$'s sum were fixed during pre-processing in Stage 1. For example, if in Stage 1, $l_i, l_j$ were not fixed but $l_k$ was fixed to false, then $x_k=0$, so $p$'s sum becomes $x_i+x_j+(1-0)\neq 2$ and consequently $p$'s polynomial construction is:
\begin{align*}
\bigl(y_i+y_j+(1-0)-0\bigr)\bigl(y_i+y_j+(1-0)-1\bigr)\bigl(y_i+y_j+(1-0)-3\bigr)=
\bigl(y_i+y_j+1\bigr)\bigl(y_i+y_j\bigr)\bigl(y_i+y_j-2\bigr).
\end{align*}

This constructs a set of predicate polynomials of size $|\mathcal{P}|$. Let $\mathcal{F}$ denote the system of polynomials containing both the literal polynomials and the predicate polynomials. Notice that every satisfying assignment for $(\mathcal{P},\mathcal{L})$ can be interpreted as a point in the variety of $\mathcal{F}$ (aka, a mutual root of all $\mathcal{F}$'s polynomials). In particular, since $(\mathcal{P},\mathcal{L})$ is satisfiable, $V(\langle \mathcal F \rangle)$ is non-empty.

Apply algorithm $\mathcal{A}$ to solve the $q$\nobreakdash-Fractional Gr\"{o}bner Problem for $\mathcal{F}$ for $\ep=(7/10+\gamma)$ for some fixed $\gamma>0$. Let $Y''$ denote the variables that $\mathcal{A}$ selects to ignore. From the definition of a Restricted $\ep$\nobreakdash-Fractional Gr\"{o}bner Basis, $Y''$ was chosen so that the set of ignored polynomials, $\mathcal{F}_{Y''}$, has bounded size:
\begin{align*}
|\mathcal{F}_{Y''}|\leq (1-\ep)|\mathcal{F}|\leq (3/10-\gamma)|\mathcal{F}|&=
(3/10-\gamma)(|\mathcal{P}|+|\mathcal{L}|)\\
&\leq (3/10-\gamma)\Big(\frac{5}{2}|\mathcal{P}|\Big)\\
&\leq \Big(\frac{3}{4}-\frac{5}{2}\gamma\Big)|\mathcal{P}|
\end{align*}
The second line follows from the first due to Lemma \ref{twofifths}. 

Let $\mathcal{P}_D$ denote the set of ignored predicate polynomials that are in $\mathcal{F}_{Y''}$, and $\mathcal{P}_R$ denote the set of retained predicate polynomials that are in $\mathcal{F}\backslash\mathcal{F}_{Y''}$. Clearly $|\mathcal{P}_D|\leq |\mathcal{F}_{Y''}|\leq \Big(\frac{3}{4}-\frac{5}{2}\gamma\Big)|\mathcal{P}|$, so:
\begin{align}\label{quarter}
|\mathcal{P}_R|= |\mathcal{P}|- |\mathcal{P}_D| \geq |\mathcal{P}|- \Big(\frac{3}{4}-\frac{5}{2}\gamma\Big)|\mathcal{P}|= \Big(\frac{1}{4}+\frac{5}{2}\gamma\Big)|\mathcal{P}|.
\end{align}That is, $\mathcal{A}$ computes a Gr\"{o}bner basis with respect to a lexicographic order for the ideal generated by the polynomials in $\mathcal{F}\backslash \mathcal{F}_{Y''}$, and inequality (\ref{quarter}) says that at least a $(\frac{1}{4}+\frac{5}{2}\gamma)$ fraction of all predicate polynomials must be in $\mathcal{F}\backslash \mathcal{F}_{Y''}$. 

The satisfiability of $(\mathcal{P}, \mathcal{L})$ ensures that a mutual root of the polynomials of $\mathcal{F}$ exists. Thus,  the polynomials from $\mathcal{F}\backslash \mathcal{F}_{Y''}$ must also have a mutual root.
Since we have a Gr\"{o}bner basis for $\langle\mathcal{F}\backslash \mathcal{F}_{Y''}\rangle$ with respect to a lexicographic order, we can solve efficiently for a mutual root of the polynomials in $\mathcal{F}\backslash \mathcal{F}_{Y''}$ via successive elimination of the variables. In particular, since the variety of $\mathcal{F}\backslash \mathcal{F}_{Y''}$ is finite (it is a subset of $\{0,1\}^{|Y\backslash Y''|})$, all partial solutions extend, and for each successive variable elimination only 2 options must be checked to find some $y_i$ that works. The resulting solution is a vector $y^*$ of length $|Y\backslash Y''|$ which is a mutual zero of all polynomials in $\mathcal{F}\backslash \mathcal{F}_{Y''}$. Each entry in the vector corresponds to some literal variable in our Max-Not-2 instance: if $y_i$ is 1 in $y^*$, assign the corresponding literal variable $x_i$ to be 1, if $y_i$ is 0 in $y^*$, assign the corresponding literal variable $x_i$ to be 0.\footnote{Because of the inclusion of the literal polynomials, and the fact that by definition $\mathcal{F}\backslash \mathcal{F}_{Y''}$ contains all the literal polynomials corresponding to variables in $Y\backslash Y''$, this routine makes an assignment for every $x_i$ corresponding to a  $y_i \in (Y\backslash Y''$).} 

The vector $y^*$ is a mutual zero of polynomials in $\mathcal{F}\backslash \mathcal{F}_{Y''}$: substituting $y^*$ into any predicate polynomial in $\mathcal{F}\backslash \mathcal{F}_{Y''}$ gives 0. By our construction of the predicate polynomials, this implies that our partial assignment for $x$ based on $y^*$ yields an acceptable total for every predicate polynomial in $\mathcal{F}\backslash \mathcal{F}_{Y''}$. That is, every predicate whose polynomial is in $\mathcal{P}_R$ is satisfied by our current partial assignment for $x$ (and from inequality (\ref{quarter}), $\mathcal{P}_R$ corresponds to strictly more than $1/4$ of the predicates from $\mathcal{P}$).

\vspace{2mm}

\noindent \textbf{Stage 3. Supplemental Random Assignment.} Let $p$ denote a
%a Not-2 
predicate corresponding to an ignored polynomial in $\mathcal{P}_D\subseteq \mathcal{F}_{Y''}$. From the form of $\mathcal{F}_{Y''}$, $p$ contains at least one literal corresponding to a $y_i\in Y''$.   
Literals corresponding to variables in $Y''$ have not yet been assigned truth values.
We exploit this to ensure that a modest fraction of predicates corresponding to polynomials from $\mathcal{P}_D$ are satisfied. 

For literals corresponding to ignored variables $y_i\in Y''$, consider an independent coin-flip procedure that assigns $x_i=1$ (a.k.a. $l_i$ true) with probability $\frac{1}{2}$, and $x_i=0$ (a.k.a. $l_i$ false) with probability $\frac{1}{2}$. 
We will argue that regardless of the partial assignment constructed for $x$ in Stage 2 (and effectively in Stage 1, through literal fixing), in expectation this random procedure satisfies at least half of the predicates corresponding to the polynomials in $\mathcal{P}_D$.

At least one of $p$'s signed literals has a truth value that is not-yet-assigned and will be decided by the coin-flipping procedure. Using the algebraic view of $p$ as a sum (e.g.$(l_i,l_j,\neg l_k)$ becomes sum $x_i+x_j+(1-x_k)$), we
analyze the probability that $p$ has an acceptable total sum at the end of the coin-flip procedure (so that $p$ is satisfied). The key point in the following case analysis is that $p$ has some \textit{current achieved sum} at the end of Stage 2 (before random coin-flipping begins), and for $p$ to be satisfied, $p$'s total sum must avoid exactly one unacceptable total (namely 2). Say that $p$'s unacceptable total minus $p$'s current achieved sum is $p$'s \textit{forbidden margin}. For example, if $p$ enters Stage 3 with no signed literals fixed, $p$'s forbidden margin will be 2. If $p$ enters Stage 3 with exactly 2 signed literals fixed to be true, and one signed literal not yet fixed, then $p$'s forbidden margin will be 0.  

First, suppose that all signed literals in $p$ correspond to unique literals. Coin flips are independent, so the possible probability spaces are as follows.

\renewcommand\arraystretch{1.2} 

\begin{tabular}{|l|l|l|} \hline
$p$'s Number of Un-fixed Signed  & $p$'s Realized Margin: Additional Signed  & Probability  \\
Literals Entering Stage 3 & Literals True After Coin-flipping& Distribution  \\
\hline
3 & $\{0,1,2,3\}$& $(\frac{1}{8},\frac{3}{8},\frac{3}{8},\frac{1}{8})$\\ 
\hline
2 & $\{0,1,2\}$& $(\frac{1}{4},\frac{1}{2},\frac{1}{4})$\\ 
\hline
1 & $\{0,1\}$& $(\frac{1}{2},\frac{1}{2})$\\ 
\hline
\end{tabular}

\vspace{2mm}

Regardless of the number of un-fixed signed literals in $p$ entering Stage 3, there is no single outcome whose probability is strictly more than $1/2$.  Thus, regardless of the value of $p$'s forbidden margin, the probability that $p$'s total sum is acceptable (that is, the realized margin avoids $p$'s forbidden margin) is at least $1/2$. Thus, if all signed literals in $p$ correspond to unique literals, the probability $p$ is satisfied is at least $1/2$.

Otherwise, the signed literals in $p$ do not correspond to unique literals. Since  Property 1 holds WLOG (as noted in Stage 1 and proved in Appendix 1), there are only two possible forms for $p$. We argue that in either case, the probability that $p$ reaches an acceptable total is at least $1/2$:

\vspace{-3mm}

\begin{enumerate}
\item Suppose $p$ contains two identical signed literals with a third signed literal that was fixed to 0 or 1 in Stage 1. Since at least one of $p$'s signed literals has not-yet-assigned truth value, 
it must be that the two identical signed literals are un-assigned entering Stage 3. If the fixed third signed literal has value 0, then the probability that $p$ avoids unacceptable total sum of 2 is $\frac{1}{2}$. If the fixed third signed literal has value 1, then the probability that $p$ avoids unacceptable total sum is $1$.

\vspace{-2mm}

\item Suppose $p$ contains two identical signed literals with a third term that is a signed literal corresponding to an unrelated index. Again $p$ has a specific forbidden margin. Consider the possible probability spaces:

\begin{tabular}{|l|l|l|} \hline
State Entering Stage 3 & $p$'s Realized Margin: Additional Signed  & Probability  \\
 & Literals True After Coin-flipping& Distribution  \\
\hline
(pair fixed at 0 or 2, single un-fixed)& $\{0,1\}$&$(\frac{1}{2},\frac{1}{2})$\\  
\hline
(pair un-fixed, single fixed at 0 or 1) & $\{0,2\}$& $(\frac{1}{2},\frac{1}{2})$\\ 
\hline
(pair un-fixed, single un-fixed) & $\{0,1,2,3\}$& $(\frac{1}{4},\frac{1}{4},\frac{1}{4},\frac{1}{4} )$\\ 
\hline
\end{tabular}

Again, regardless of $p$'s current achieved sum at the end of Stage 2, the probability of $p$'s forbidden margin being realized in Stage 3 is at most $\frac{1}{2}$. Thus, the probability of $p$ being satisfied (by reaching an acceptable total) is greater than or equal to $\frac{1}{2}$.
\end{enumerate}

\vspace{-2mm}
\noindent Thus, for arbitrary predicate $p$ corresponding to $\mathcal{P}_D$, the probability $p$ is satisfied at the end of Stage 3 is at least $ 1/2$. Since the expectation of the sum is the sum of the expectations, we have that the expected number of predicates satisfied by the coin-flipping procedure is at least $|\mathcal{P}_D|/2$.  

Such a randomized assignment procedure can then be derandomized via the well-known method of conditional expectations to obtain a deterministic assignment algorithm with quality that matches the expected guarantee.
Namely, given a list of the variables $x_j$ to be randomly assigned, proceed through the list deterministically fixing truth values of one variable at a time as follows: for each variable $x_j$, compare the conditional expected values of the number of clauses satisfied given assignments of true / false for $x_j$. One of these conditional expectations must be at least as large as the expected value when $x_j$ also is decided uniformly at random.  Thus, assigning $x_j$ the truth value with the larger conditional expectation can be used to iteratively compute a deterministic solution of quality at least as high as the original expected value of half of $|\mathcal{P}_D|$.

We finally have a full truth assignment for the literals $\mathcal{L}$ that satisfies every predicate in $\mathcal{P}_R$ and at least $1/2$ of the predicates in $\mathcal{P}_D$:

\begin{align*}
\text{Total Predicates We Satisfy}\geq|\mathcal{P}_R|+\frac{|\mathcal{P}_D|}{2} 
&= 
|\mathcal{P}_R|+\frac{(|\mathcal{P}|-|\mathcal{P}_R|)}{2}\\
&= 
\frac{|\mathcal{P}_R|}{2}+\frac{|\mathcal{P}|}{2}\\
&\geq 
\frac{1}{2}\Big(\frac{1}{4}+\frac{5}{2}\gamma\Big)|\mathcal{P}|+\frac{|\mathcal{P}|}{2}\\
&\geq 
\Big(\frac{5}{8}+\frac{5}{4}\gamma\Big)|\mathcal{P}|\\
&\geq 
\Big(\frac{5}{8}+\delta\Big)|\mathcal{P}| \hspace{5mm} \text{ for some  $\delta>0$}
\end{align*}

From the second to third line, inequality (\ref{quarter}) is applied. As $\gamma>0$, let $\delta=\frac{5}{4}\gamma$ to get the final statement about $\delta$. This reduction runs in polynomial time. The
final inequality shows that our method exceeds the hardness-of-approximation bound of H{\aa}stad for Max-Not-2 (listed as Theorem \ref{hastadnot2} earlier). Thus we have a contradiction, and Theorem \ref{extfrachard} is proved.
\end{proof}

Next we prove the first hardness result for approximate Gr\"{o}bner basis computation for polynomial systems of maximum degree 2 (matching the degree bound for NP-hardness of exact Gr\"{o}bner basis computation).  

%Is robust hardness for maximum-degree-2 polynomial systems unique to the $q$\nobreakdash-Fractional Gr\"{o}bner Basis Model? For additional perspective, consider that existing hardness results for the $c$-partial Gr\"{o}bner Basis problem of De Leora et al. are based on graph coloring hardness, and are valid for degree bounds that match the chromatic number in the corresponding coloring-hardness result. Since 2-colorings are easy to compute (when they exist), it seems that a non-trivial result for the $c$-partial Gr\"{o}bner Problem in systems of maximum degree 2 would need to take a significantly different approach. 

Our proof of Theorem \ref{extfracharddeg2} is closely inspired by our proof of Theorem \ref{extfrachard}. Some notation and language introduced there will be directly reused. Differences arise from the form of the logical predicates considered: properties from preprocessing in Stage 1 are slightly different, the polynomial system constructed has lower-degree predicate polynomials, and the polynomials' form impacts our analysis of both the Gr\"{o}bner-Basis-based partial truth assignment and the final coin-flip-based portion of the truth assignment.

We rely on an earlier 2001 hardness result due to H{\aa}stad for a problem involving logical predicates of arity 3 where the system of predicates is satisfiable.  The predicates are now of the following form:

\begin{align}\label{oxrform}
OXR(q_1,q_2,q_3)=q_1 \vee (q_2 \oplus q_3)
\end{align}
Here $q_1,q_2,q_3$ are signed literals which represent positive or negated forms of literals from a set $\mathcal{L}$. This predicate is satisfied if at least one of $q_1$ or $(q_2 \oplus q_3)$ is true.  The second option $(q_2 \oplus q_3)$ is often called an ``xor" or ``exclusive or." This exclusive or is true when exactly one of $q_2$ or $q_3$ is true. Describing (\ref{oxrform}) above, we will say that $q_1$ is in the \textit{special position} of $p$ and that $q_2$ and $q_3$ are in the \textit{symmetric positions} of $p$. 

\begin{thm} [H{\aa}stad \cite{hast01}] \label{hastad2}
For any $\delta>0$, given a satisfiable instance of Max-OXR of arity 3, there is no polynomial-time algorithm to find a truth assignment that satisfies a $(\frac{6}{8}+\delta)$-fraction of the predicates (unless $P=NP$). 
\end{thm}

% The fraction above is intentionally not given in lowest form: the unreduced fraction calls attention to the fact that OXR predicates are satisfied by 6 out of 8 of the possible truth settings for the contained literals.

\begin{proof}[Proof of Theorem \ref{extfracharddeg2}]  Assume, for contradiction, that an $\mathcal{A}$ as described in Theorem \ref{extfracharddeg2} does exist with $\ep>4/5$. Define $\gamma=\ep-4/5$ so that $\gamma>0$. Given an arbitrary satisfiable input $(\mathcal{P}, \mathcal{L})$ of the \textit{Max-OXR Problem of arity 3} we compute a truth  assignment of forbidden quality in polynomial time. Similar to the proof of Theorem \ref{extfracharddeg2}, the assignment is constructed over three stages.\\

\noindent \textbf{Stage 1. Instance Preprocessing.} The form of OXR predicates sometimes unequivocally dictates a literal's truth value in all satisfying assignments. For example, if $q_1 \vee (q_3 \oplus q_3)$ is satisfied, then $q_1$ must be true. Further, if some literal appears in only one OXR predicate, then manipulating that literal's truth value can always satisfy the predicate (regardless of the position in which the literal appears or the truth values of all other literals).  

In Appendix 2,  we use simple arguments like these to prove that an arbitrary satisfiable \textit{Max-OXR} instance may be preprocessed (fixing some literals to be true/false and removing some predicates) so that
WLOG the remaining instance is still satisfiable, an $\alpha$-approximate logical assignment for the updated instance is at minimum $\alpha$-approximate for the original Max-OXR instance, and
the following two convenient properties hold:

\vspace{1mm}
\noindent \textbf{Property 1.} \textit{If $p\in \mathcal{P}$, then the two signed literals in $p$'s symmetric positions correspond to unique literals.} 
\vspace{1mm}

\noindent \textbf{Property 2.} \textit{After the updates in Stage 1, Every literal $l\in \mathcal{L}$ appears in some form (negated or positive) in at least two predicates from $\mathcal{P}$.}

\vspace{1mm}
\noindent As an immediate consequence of Property 2, we can derive Lemma \ref{twofifths} (as in our proof of Theorem \ref{extfracharddeg2}): $|\mathcal{P}|\geq \frac{2}{5}(|\mathcal{P}|+|\mathcal{L}|)$. This inequality will be used in Stage 2. Later, in Stage 3, we will use Property 1 to analyze a final round of supplemental random truth assignment.

\vspace{2mm}
\noindent \textbf{Stage 2.  Polynomial Encoding and Gr\"{o}bner-based partial assignment. } Define a polynomial system based on $(\mathcal{P}, \mathcal{L})$ as follows. Create a variable $y_i$ corresponding to the $i$th literal of $\mathcal{L}$. Denote this set of variables by $Y$. Create a polynomial $y_i(1-y_i)$. This gives a set of $|\mathcal{L}|$ literal polynomials whose mutual roots are exactly $\{0,1\}^{|\mathcal{L}|}$. As in the previous proof, $y_i=0$ implies $l_i$ is false, and $y_i=1$ implies $l_i$ is true. 

Next create a set of predicate polynomials. Consider $p\in\mathcal{P}$.
%: $p$ contains at most 3 signed literals whose truth value remains unfixed.  
We create a polynomial corresponding to $p$ as follows: it will be the product of 2 terms.  The first term will correspond to $p$'s special position. If the signed literal in the special position of $p$ corresponds to literal $l_i$ and appears in positive form, then the first term in $p$'s polynomial will be $(y_i-1)$. If the signed literal in the special position of $p$ corresponds to literal $l_i$ and appears in negated form, then the first term in $p$'s polynomial will be $(y_i)$. The second term in $p$'s polynomial will correspond to $p$'s xor.  For $p$'s xor to be true, using language from the proof of Theorem \ref{extfrachard}, there is only one \textit{acceptable sum} for variables corresponding to $p$'s symmetric-position signed literals (namely 1). Let $l_j$ and $l_k$ denote the literals corresponding to the signed literals in the symmetric positions of $p$ ($k\neq j$ by Property 1).  We summarize the construction of the second term of $p$'s polynomial below:\\

\begin{tabular}{l|l}
Form of $p$'s xor & Form of second term in product-defined polynomial for $p$\\ \hline
$(l_j\oplus l_k)$ & $(y_j+y_k-1)$ \\
$(\neg l_j\oplus l_k)$ & $((1-y_j)+y_k-1)$ \\
$(l_j\oplus \neg l_k)$ & $(y_j+(1-y_k)-1)$ \\
$(\neg l_j\oplus \neg l_k)$ & $((1-y_j)+(1-y_k)-1)$\\
\end{tabular}

\vspace{2mm}
The product of the two described terms gives the polynomial for $p$. Note that $l_j$ and $l_k$ must be different literals from Property 2, but that $i$ might match one of $j$ or $k$.  Since the literals fixed in Stage 1 already have permanently-fixed truth values, the corresponding 0s or 1s are substituted into the predicate polynomials defined above.  
For example, the predicate $\neg l_i \vee (l_j\oplus \neg l_k)$ gives polynomial $(y_i)(y_j+(1-y_k)-1)$ which simplifies to $y_i(y_j-y_k).$
For further example, if $l_i$ was set to true (a.k.a. 1) in Stage 1, and $l_j$ and $l_k$ remain unfixed, then $\neg l_i \vee (l_j\oplus \neg l_k)$ would produce the simple predicate polynomial $y_j-y_k.$

The constructed a set of predicate polynomials has cardinality $|\mathcal{P}|$.
Each predicate polynomial is the product of two terms each of degree at most 1: each predicate polynomial has maximum total degree at most 2. Also, as in the previous reduction, each predicate polynomial contains at most 3 variables (corresponding to a limit of three signed literals per OXR predicate).

Let $\mathcal{F}$ denote the system of polynomials containing both the literal polynomials and the predicate polynomials. By our construction, every satisfying assignment for $(\mathcal{P},\mathcal{L})$ can be interpreted as a mutual root of the polynomials in $\mathcal{F}$. 
%and when restricted to the mutual roots for the set of \emph{literal polynomials}, zeros for predicate polynomials correspond to truth assignments that are satisfying. 
Since $(\mathcal{P},\mathcal{L})$ is satisfiable, at least one mutual root exists. Apply algorithm $\mathcal{A}$ to solve the $q$\nobreakdash-Fractional Gr\"{o}bner problem for $\mathcal{F}$ for $\ep=(4/5+\gamma)$ for some fixed $\gamma>0$. Let $Y''$ denote the variables $\mathcal{A}$ selects to ignore: $Y''$ was chosen so that

\vspace{-5mm}

\begin{align*}
|\mathcal{F}_{Y''}|\leq (1-\ep)|\mathcal{F}|\leq (1/5-\gamma)|\mathcal{F}|&= (1/5-\gamma)(|\mathcal{P}|+|\mathcal{L}|)\\
&\leq (1/5-\gamma)\Big(\frac{5}{2}|\mathcal{P}|\Big)\\
&\leq \Big(\frac{1}{2}-\frac{5}{2}\gamma\Big)|\mathcal{P}|.
\end{align*}

\noindent The third line follows from the second line due to Lemma \ref{twofifths}. Let $\mathcal{P}_D$ denote the set of predicate polynomials in $\mathcal{F}_{Y''}$, and $\mathcal{P}_R$ denote the set of predicate polynomials in $\mathcal{F}\backslash\mathcal{F}_{Y''}$. Since $|\mathcal{P}_D|\leq |\mathcal{F}_{Y''}|\leq \Big(\frac{1}{2}-\frac{5}{2}\gamma\Big)|\mathcal{P}|$,

\begin{align}\label{half}
|\mathcal{P}_R|= |\mathcal{P}|- |\mathcal{P}_D| \geq |\mathcal{P}|- \Big(\frac{1}{2}-\frac{5}{2}\gamma\Big)|\mathcal{P}|= \Big(\frac{1}{2}+\frac{5}{2}\gamma\Big)|\mathcal{P}|.
\end{align}

That is, $\mathcal{A}$ computes a Gr\"{o}bner basis with respect to a lexicographic order for the ideal generated by the polynomials in $\mathcal{F}\backslash \mathcal{F}_{Y''}$, and inequality (\ref{half}) says that at least a $(\frac{1}{2}+\frac{5}{2}\gamma)$ fraction of all predicate polynomials must be in $\mathcal{F}\backslash \mathcal{F}_{Y''}$. 
%The satisfiability of $(\mathcal{P}, \mathcal{L})$ ensured that $V(\langle \mathcal{F}\rangle)$ was non-empty, so $V(\langle \mathcal{F}\backslash \mathcal{F}_{Y'}\rangle)$ is certainly non-empty.
As in the proof of Theorem \ref{extfrachard}, given the Gr\"{o}bner Basis for $\langle\mathcal{F}\backslash \mathcal{F}_{Y''}\rangle$ with respect to a lexicographic order, a mutual root, $y*$, of the polynomials in $\mathcal{F}\backslash \mathcal{F}_{Y''}$ can be efficiently computed via successive elimination of the variables. Each entry in $y*$ corresponds to some literal variable in our Max-OXR instance:
if $y_i$ is 1 in $y*$, assign the corresponding literal $l_i$ to be true, if $y_i$ is 0 in $y*$, assign the corresponding literal $l_i$ to be false. Since $y*\in \{0,1\}^{|Y\backslash Y''|}$, this routine makes an assignment for every $l_i$ corresponding to a  $y_i \in Y\backslash Y''$. 

The vector $y^*$ is a mutual zero of all polynomials in $\mathcal{F}\backslash \mathcal{F}_{Y''}$.
%which, by definition, contains all literal polynomials for $y_i\in Y\backslash Y'$.
%substituting $y^*$ into any predicate polynomial in $\mathcal{F}\backslash \mathcal{F}_{Y'}$ gives 0. 
Consider the factored form of the predicate polynomials we constructed: such polynomials evaluate to 0 under our constructed truth assignment if, and only if, either: (a) the signed literal corresponding to $p$'s special position is True or (b) the xor over $p$'s symmetric-position signed literals is True (or both). Thus, 
%since substituting $y$ into any predicate polynomial
%in $\mathcal{F}\backslash \mathcal{F}_{Y'}$ gives 0, 
our $y*$-based partial truth assignment satisfies every predicate whose polynomial is in $\mathcal{P}_R$.

\vspace{2mm}

\noindent \textbf{Stage 3. Supplemental Random Assignment.} Literals corresponding to variables in $Y''$ have not yet been assigned truth values. For these literals, consider an independent coin-flip procedure that assigns True with probability $1/2$, and False with probability $1/2$. We argue that regardless of the partial truth assignment constructed in Stage 2 (and effectively in Stage 1), in expectation, after this procedure, at least half of the predicates corresponding to the polynomials in $\mathcal{P}_D$ are satisfied. Such a procedure can be derandomized via the method of conditional expectations.

Let $p$ denote an OXR predicate corresponding to a polynomial in $\mathcal{P}_D\subseteq \mathcal{F}_{Y''}$. From the form of $\mathcal{F}_{Y''}$,  $p$ contains (some form of) at least one literal corresponding to a $y_i\in Y''$: before the coin-flip procedure, the truth value of such a literal has not yet been decided. Through case analysis, we show that the probability that $p$ is satisfied at the end of the coin-flip procedure is always at least $1/2$.
Suppose that before coin-flipping:
\begin{itemize}
\item The truth value of the signed literal in $p$'s special position has not yet been decided. The coin-flip procedure sets this special-position signed literal to be true with probability $1/2$. Thus, the probability that $p$ is satisfied at the end of the coin-flip procedure is at least $1/2$. 
\item Otherwise, the truth value of the signed literal in $p$'s special position was already decided. Then either:

\vspace{-3mm}

\begin{itemize}
\item Exactly one of the signed literals in a symmetric position of $p$ has not yet been decided. Since the signed literal in $p$'s other symmetric position has a fixed value, the probability that $p$'s xor over the two symmetric positions is true as a result of the coin-flip procedure is $1/2$. Thus, the probability that $p$ is satisfied at the end of the coin-flip procedure is at least $1/2$. 
\item Otherwise, both of the signed literals in the symmetric positions of $p$ have not yet been decided. From Property 1 (proved to hold WLOG in Appendix 2), these symmetric-position signed literals correspond to unique literals. Thus, the probability that $p$'s xor over the two symmetric positions is true as a result of the coin-flip procedure is $1/2$ ($p$'s xor is satisfied by exactly half of the possible truth assignments). Thus, the probability that $p$ is satisfied at the end of the coin-flip procedure is at least $1/2$. 
\end{itemize}
\end{itemize}

\vspace{-3mm}

\noindent Thus, if predicate $p$ corresponds to a polynomial from $ \mathcal{P}_D$, the probability $p$ is satisfied at the end of Stage 3 is at least $1/2$. The expectation of the sum is the sum of the expectations, so  the expected number of predicates from  $\mathcal{P}_D$ that are satisfied after the coin-flip procedure is at least $|\mathcal{P}_D|/2$. Finally, we have a full truth assignment for literals $\mathcal{L}$ that satisfies all predicates in $\mathcal{P}_R$ and at least half of $\mathcal{P}_D$'s predicates, so:

\begin{align*}
\text{Total Predicates We Satisfy}\geq|\mathcal{P}_R|+\frac{|\mathcal{P}_D|}{2} = 
|\mathcal{P}_R|+\frac{(|\mathcal{P}|-|\mathcal{P}_R|)}{2}&= 
\frac{|\mathcal{P}_R|}{2}+\frac{|\mathcal{P}|}{2}.\\
\end{align*}
Using inequality (\ref{half}),

\begin{align*}
\frac{|\mathcal{P}_R|}{2}+\frac{|\mathcal{P}|}{2}&\geq 
\frac{1}{2}\Big(\frac{1}{2}+\frac{5}{2}\gamma\Big)|\mathcal{P}|+\frac{|\mathcal{P}|}{2}\\
&\geq 
\Big(\frac{6}{8}+\frac{5}{4}\gamma\Big)|\mathcal{P}|\\
&\geq 
\Big(\frac{6}{8}+\delta\Big)|\mathcal{P}| \hspace{5mm} \text{ for some  $\delta>0$.}
\end{align*}

\noindent 
%From the second to third line, inequality (\ref{half}) is applied. 
Since $\gamma>0$, let $\delta=\frac{5}{4}\gamma$ to get the final statement about $\delta$. This reduction runs in polynomial time. Thus we violate the hardness-of-approximation bound of H{\aa}stad for Max\nobreakdash-OXR (listed as Theorem \ref{hastad2} earlier), obtaining a contradiction.
\end{proof}

%%%%%%%%%%%%%%

%%%%%%%%%%%%%%

\section{Proofs for the Strong $c$\nobreakdash-Partial Gr\"{o}bner problem}
\label{sec:partial}

\begin{proof}[Proof of Theorem \ref{thm:stronger}]
As in the proofs of Theorem \ref{thm:main1} and \ref{thm:main2}, consider a 3-CNF Boolean formula $\Phi$ and define functions $f_C \in \K[x_1,\ldots,x_n]$ for each clause $C$ of $\Phi$. Let $\F$ denote the family of such functions, and, as in De Loera et al.~\cite{deloera}, define $c+1$ copies $\F_1, \F_2, \ldots, \F_{c+1}$ of $\F$ on disjoint sets of variables, such that $\F_i\in \K[x_{i,1}, x_{i, 2}, \ldots, x_{i, n}]$. Then, the system $\bigcup_i \F_i$ has a solution if and only if $\F$ has a solution.

Now, let $g$ denote the polynomial $x + \sum_{i, j} x_{i, j}$, where $x$ is a new variable that does not appear in any other polynomial. Observe that the system $\G := \{g\} \cup \bigcup_i \F_i$ has a solution if and only if $\F$ does, since for each solution of $\bigcup_i \F_i$ there exists a (unique) value of $x$ that yields a solution for $g$.  Moreover, for any subset $\Hh\subset \G$ with $|\Hh| = c$, the system $\G \backslash \Hh$ still has a solution if and only if $\F$ does, because there must be at least one $\F_i$ which is disjoint from $\Hh$. Since $\F$ has a solution exactly when $\Phi$ is satisfiable, we conclude that $\G \backslash \Hh$ has $\{1\}$ as its Gr\"obner basis if and only if $\Phi$ is satisfiable.

Therefore, solving the Strong $c$\nobreakdash-Partial Gr\"obner problem enables us to solve 3SAT with a polynomial-time reduction, using degree-3 polynomials and requiring only that we distinguish the (polynomial-size) Gr\"obner basis $\{1\}$. 
\end{proof}

In our proof of Theorem \ref{prop:lundgroebner} we use a result on graph coloring due to Lund and Yannakakis \cite{lund}.

\begin{thm}[Theorem 2.8 in \cite{lund}]\label{lundy}  For every constant $g>1$, there is a constant $k_g$ such that the following problem is NP-hard. Given a graph $G$, distinguish between the case that $G$ is colorable with $k_g$ colors and the case that the chromatic number of $G$ is at least $gk_g$.
\end{thm}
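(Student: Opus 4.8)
The plan is to prove Theorem~\ref{lundy} in two stages. First I would establish a \emph{fixed-gap} version: a polynomial-time reduction producing graphs that in the ``yes'' case are $k_0$-colorable for some absolute constant $k_0$, and in the ``no'' case have chromatic number at least $g_0 k_0$ for some absolute constant $g_0 > 1$. Second, I would amplify the gap from $g_0$ to an arbitrary constant $g$ by replacing the graph with a suitable power of itself, paying only a constant blow-up in the number of colors.

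For the fixed-gap base case I would use PCP machinery. The PCP theorem gives, for an NP-complete language, a verifier with constant query complexity $q$ and soundness bounded away from $1$; an FGLSS-style reduction (vertices are (random-string, locally-accepting-configuration) pairs, edges join inconsistent pairs) then produces a graph $G$ with $n(G) = R\cdot 2^{O(q)}$ vertices, $R$ being the number of random strings. In the ``yes'' case the honest proof gives an independent set of size $R$, and — provided one uses a PCP with an appropriate ``covering'' structure, so that the configurations lying over each query pattern split into $O(1)$ globally consistent independent sets — one gets an explicit proper coloring of $G$ with some constant number $k_0$ of colors. In the ``no'' case soundness forces every independent set to be small, so that $n(G)/\alpha(G) \ge g_0 k_0$, hence $\chi(G) \ge g_0 k_0$; pushing the soundness down by a constant number of repetitions ensures $g_0 > 1$ with room to spare. (A tempting shortcut would be to bootstrap from the known NP-hardness of $4$-coloring a $3$-colorable graph \cite{guruswami}, but that controls only $\chi$, not $n(G)/\alpha(G)$, so it does not drive the amplification below.)

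For the amplification I would pass to the $t$-fold \emph{disjunctive} (co-normal) power $G^{*t}$: its vertices are $t$-tuples over $V(G)$, and two tuples are adjacent iff they are adjacent in $G$ in at least one coordinate. Two elementary facts do the work. Coordinatewise products of optimal colorings give $\chi(G*H)\le\chi(G)\chi(H)$, hence $\chi(G^{*t})\le k_0^{\,t}$ in the ``yes'' case. Conversely $n(G*H)=n(G)n(H)$ and $\alpha(G*H)=\alpha(G)\alpha(H)$, so $\chi(G^{*t})\ge n(G^{*t})/\alpha(G^{*t})=\bigl(n(G)/\alpha(G)\bigr)^{t}\ge (g_0 k_0)^{t}$ in the ``no'' case. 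Thus $G^{*t}$ is either $k_0^{\,t}$-colorable or has chromatic number at least $g_0^{\,t} k_0^{\,t}$; setting $k_g:=k_0^{\,t}$ and $t:=\lceil \log g/\log g_0\rceil$ yields exactly the promised gap, and for fixed $t$ the map $G\mapsto G^{*t}$ runs in polynomial time.

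The main obstacle is the base case, and within it the control of the completeness-instance chromatic number. Exhibiting a large independent set in the ``yes'' case is immediate from FGLSS, but turning that into an honest $O(1)$-coloring of the whole completeness graph requires more than the soundness and query complexity of the PCP — it is where one must exploit its combinatorial (covering) structure, and it is precisely the step carried out in detail in Lund and Yannakakis~\cite{lund}. Once that is in hand the amplification is routine, depending only on the behaviour of $\chi$ (as an upper bound) and of $n/\alpha$ under the co-normal product.
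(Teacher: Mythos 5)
The paper does not prove this statement at all: Theorem \ref{lundy} is quoted directly from Lund and Yannakakis \cite{lund} and used as a black box in the proof of Theorem \ref{prop:lundgroebner}, so there is no internal proof to compare yours against. Judged on its own terms, your amplification stage is correct: for the co-normal power one has $\chi(G^{*t}) \le \chi(G)^{t}$, $n(G^{*t}) = n(G)^{t}$, and $\alpha(G^{*t}) = \alpha(G)^{t}$, so a base instance that is $k_0$-colorable in the yes case and has $n/\alpha \ge g_0 k_0$ in the no case yields the stated gap for any constant $g$ by taking $t = \lceil \log g / \log g_0 \rceil$, and the construction is polynomial for fixed $t$. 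Your remark that a bare chromatic-number gap (such as $3$ versus $4$) would not amplify, because only $n/\alpha$ (or the fractional chromatic number) behaves multiplicatively under this product, is also correct and shows you see why the stronger no-case guarantee is needed.

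The genuine gap is the base case, and you have effectively conceded it. FGLSS plus the PCP theorem controls the independence number in both cases, but it gives no constant upper bound on the chromatic number of the completeness instance: the independent sets indexed by global proofs do cover the FGLSS graph, but there are exponentially many of them, and extracting an $O(1)$-size covering family is exactly the ``covering structure'' you invoke. That property does not follow from soundness and query complexity alone; it requires a specially engineered proof system, and building it is the actual content of the Lund--Yannakakis argument (later formalized as covering complexity). Since you defer precisely this step to \cite{lund}, your write-up, read as a proof of the cited theorem, is circular at its core: what you have genuinely established is the (clean, but standard) reduction of the arbitrary-constant-gap statement to a fixed-gap statement with the stronger $n/\alpha$ guarantee. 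Note also that the throwaway ``constant number of repetitions'' needs care: repetition lowers the soundness $s$ but multiplies the query complexity, hence inflates both the number of accepting configurations per random string and the size of any covering family, so the inequality you need (roughly $1/s$ exceeding $g_0$ times the yes-case color bound) does not come for free and must be arranged by the specific construction you are omitting.
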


\begin{proof}[Proof of Theorem \ref{prop:lundgroebner}]
Our proof is by contradiction. For arbitrary fixed $h>0$, assume that for every possible degree bound $k$ there exists a polynomial-time algorithm $\A_k$ to solve the Strong $(hk-1)$\nobreakdash-Partial Gr\"{o}bner problem for polynomial systems of maximum degree $k$.

Let $g=h+1$, so that $g>1$. Applying Theorem \ref{lundy}, we conclude there exists a constant $k_g$ such that it is NP-hard to distinguish whether a graph is $k_g$-colorable or has chromatic number at least $gk_g$. Given a graph $G=(V,E)$ that is guaranteed to be one of these two types, we will present a polynomial-time method to distinguish between the two possible types.

%\noindent\textsc{Method to distinguish ``$G$ is $k_g$-colorable" from $\chi(G)\geq g k_g$:}
As in \cite{deloera}, write the $k_g$-coloring ideal for $G$, $\mathcal{I}_{k_g}(G)$. This coloring ideal is a polynomial system with $|V|$ variables and $|V|+|E|$ polynomials of maximum degree $k_g$. From our initial assumption, there exists a polynomial-time algorithm $\A_{k_g}$ to solve the Strong $(hk_g-1)$\nobreakdash-Partial Gr\"{o}bner Problem in polynomial systems of maximum degree $k_g$, so apply $\A_{k_g}$ to $\mathcal{I}_{k_g}(G)$.  The algorithm $\A_{k_g}$ returns a Gr\"{o}bner basis $\mathcal{G}$ for $\langle\mathcal{F}_{in}\rangle$ and a list of at most $hk_g-1$ independent sets of variables $X_1,...,X_b$. Let $X_{out}= (\cup_{i=1}^b X_i)$.

Consider the form of $\mathcal{F}_{in}$. Each variable $x\in X_{out}$ corresponds to a node $n$ from $G$. By construction of $\mathcal{I}_{k_g}(G)$, $x$ appears in two types of polynomials. First, $x$ is the sole variable in a \emph{node polynomial} that encodes that $n$ must be assigned some $k_g$th root of unity by any point in the variety of $\mathcal{I}_{k_g}(G)$. Second, for each edge that contains node $n$, variable $x$ appears in an \emph{edge polynomial} that encodes that $n$ and some neighbor must be assigned different $k_g$th roots of unity by any point in the variety of $\mathcal{I}_{k_g}(G)$. When all polynomials that contain variables in $X_{out}$ are removed from $\mathcal{F}$ to get $\mathcal{F}_{in}$ the result is a coloring ideal for a subgraph $G'$ of $G$ obtained by removing $b$ independent sets of nodes and all edges containing these nodes from $G$. In particular, $\langle\mathcal{F}_{in}\rangle$ is the coloring ideal that describes proper colorings of such a subgraph $G'$ by $k_g$ colors, $\mathcal{I}_{k_g}(G')$.

If $\mathcal{G}$ is $\{1\}$, then $\mathcal{I}_{k_g}(G')$ must be empty. This means that no $k_g$\nobreakdash-coloring of $G'$ exists. Then, since $G'$ is a subgraph of $G$, no $k_g$-coloring of $G$ exists. Since $G$ is guaranteed to be either $k_g$\nobreakdash-colorable or have $\chi(G)\geq hk_g$, it must be that $\chi(G)\geq hk_g$. Thus, if $\mathcal{G}$ contains 1, we may conclude that $G$ has chromatic number at least $hk_g$.

Suppose that $\mathcal{G}$ is not $\{1\}$. We argue that this implies the \emph{existence} of a proper coloring of $G$ by strictly less than $gk_g$ colors (so that, in fact, we can conclude $G$ is $k_g$\nobreakdash-colorable). Since $\mathcal{G}$ does not contain 1, the variety of  $\mathcal{I}_{k_g}(G')$ is non-empty. Thus, some coloring of $G'$ by $k_g$ colors must exist. Each of the $b\leq hk_g-1$ independent sets of variables, $X_1,...,X_b$, returned by $\A_{k_g}$ corresponds to an independent set of nodes in $G$. Every node of $G$ is either in such an independent set or in $G'$. Consider coloring each such independent set of nodes by a unique color, and using $k_g$ additional colors to properly color the nodes of $G'$. This would produce a complete proper coloring of $G$ by 
at most
\[
k_g+ hk_g-1= k_g+ (g-1)k_g-1=gk_g-1 \text{ colors.}
\]
Thus, if $\mathcal{G}$ is not $\{1\}$, then there exists some proper coloring of $G$ by $gk_g-1$ colors.
So, it can not be that $\chi(G)\geq gk_g$. Since we were guaranteed that $G$ was either $k_g$\nobreakdash-colorable or had $\chi(G)\geq gk_g$, it must be that $G$ is $k_g$\nobreakdash-colorable. So we conclude that $G$ is $k_g$\nobreakdash-colorable.

For arbitrary $G$, to distinguish between the two possible cases our method constructs a polynomially-sized input to $\A_{k_g}$, which from our initial assumption has polynomial running time. Therefore, in polynomial time our method accurately solves a problem we know to be NP-hard from Theorem \ref{lundy}. Thus (unless P$=$NP) we have a contradiction, and must reject our initial assumption. 
\end{proof}

Our decision rule in this reduction depends only on whether or not the Gr\"{o}bner basis is $\{1\}$. Thus, again, our reduction applies whether  $\A_{k_g}$ is polynomial in the size of our constructed polynomial system, $\mathcal{I}_{k_g}(G)$, or polynomial in the size of the Gr\"{o}bner basis that should be returned. Namely, if $\{1\}$ is the Gr\"{o}bner basis this information will be returned in polynomial time regardless. Otherwise, if $\{1\}$ is not returned in polynomial time it is clear that $\mathcal{G}$ is not $\{1\}$ and we must conclude that $G$ is $k_g$\nobreakdash-colorable.

\section*{Acknowledgments}
\noindent The authors would like to thank David Cox and Jes\'us De Loera for constructive input on these ideas.  D.R.~was partially supported by NSF Grant No.~1122374. G.S. was supported  by NSF Grant No. DMS-1440140 while in residence at the Mathematical Sciences Research Institute in Berkeley, California during the Fall 2017 semester.

\bibliographystyle{plain}
\bibliography{references}

\section*{Appendix 1:  Max Not-2 Instance Preprocessing \\(for Proof of Theorem \ref{extfrachard}, Stage 1)}

Our proof of Theorem \ref{extfrachard} is by contradiction: we aim to construct a polynomial-time algorithm for the satisfiable Max Not-2 Problem with guaranteed approximation performance that contradicts the $(\frac{5}{8}+\delta)$ bound of H{\aa}stad \cite{hast14}. In this Appendix we argue that, given an arbitrary satisfiable input $(\mathcal{P}, \mathcal{L})$ for the satisfiable Max-Not-2 Problem of arity 3, the instance may be preprocessed (removing some predicates and fixing some literals to be true/false) so that several useful properties hold. 
Furthermore, each predicate removal and literal assignment we make to $(\mathcal{P}, \mathcal{L})$ below preserves that a satisfying assignment exists and that if we obtain an $\alpha$-approximate logical assignment for the updated instance, then this logical assignment is \emph{at minimum} $\alpha$-approximate for the original instance. As a result, constructing a polynomial-time algorithm for the Max Not-2 Problem for pre-processed instances (for which convenient special properties hold) with approximation guarantee $(\frac{5}{8}+\delta)$ will be sufficient to obtain the contradiction we seek.

First we remove some predicates and literals from $(\mathcal{P}, \mathcal{L})$. Iterate through the predicates in $\mathcal{P}$ one at a time.  If $p\in \mathcal{P}$ has strictly more than one signed literal corresponding to a single literal $l_i$, then update $(\mathcal{P}, \mathcal{L})$ according to which of the following cases applies:
\begin{enumerate}
\item If $p$ contains three identical signed literals, then since $p$ is a Not-2 predicate, $p$ is trivially satisfied (every logical assignment for $\mathcal{L}$ satisfies $p$). Remove $p$ from $\mathcal{P}$.
\item Otherwise, if $p$ contains exactly 2 identical signed literals indexed by $i$ then:
\begin{enumerate}
\item If $p$'s third signed literal is the other form of $l_i$: every satisfying assignment for $(\mathcal{P}, \mathcal{L})$ must cause $p$ to contain exactly 1 true literal (the only alternative is 2, which can't be satisfying). Thus, we know unequivocally the value $l_i$ must take in every satisfying assignment. Substitute the forced value of $l_i$ into every predicate containing a signed form of literal $l_i$. Say $l_i$ has been permanently fixed. Remove 
%the literal indexed by $i$ from $\mathcal{L}$, and remove 
$p$ from $\mathcal{P}$. 
\item If $p$ has no third signed literal: every satisfying assignment for $(\mathcal{P}, \mathcal{L})$ must cause $p$ to contain exactly 0 true literals (the only alternative is 2, which can't be satisfying). Thus, we know unequivocally the value $l_i$ must take in every satisfying assignment. Substitute the forced value of $l_i$ into every predicate containing a signed form of literal $l_i$. Say $l_i$ has been permanently fixed. Remove 
%the literal indexed by $i$ from $\mathcal{L}$, and remove 
$p$ from $\mathcal{P}$. 

\item If $p$'s third signed literal corresponds to some other index $j$ where $l_j$ has not yet been fixed, then do nothing.

\item If $p$'s third signed literal originally corresponded to some other index $j$ where $l_j$ has already been permanently fixed to 0 or 1, then do nothing. 

\end{enumerate}
\item Otherwise, it must be that $p$ contains 2 signed literals corresponding to the same index $i$, but with opposing signs (one negated form and one positive form).  Then:
\begin{enumerate}

\item If $p$'s third signed literal is also for index $i$, then it must be identical to one of $p$'s opposing-signed literals: this was already covered in sub-case 2(a) above.
\item  If $p$'s third signed literal corresponds to some other index $j$ for which $l_j$ has not yet been fixed: since exactly one of $p$'s opposing-sign $i$ literals are true, we know unequivocally the value $l_j$ must take in every satisfying assignment. Substitute the forced value of $l_j$ into every predicate containing a signed form of literal $l_j$. Say $l_j$ has been permanently fixed. 
 Remove 
 %the literal indexed by $j$ from $\mathcal{L}$, and remove 
 $p$ from $\mathcal{P}$. 
\item If $p$'s original third signed literal corresponded to some other index $j$ for which $l_j$ has already been permanently fixed to 0 or 1: since variables are only fixed to values we know they must take in every satisfying assignment, $l_j$ must have been fixed so that the signed literal in predicate $p$ corresponding to $l_j$ was false.\footnote{Otherwise, the signed literal corresponding to $l_j$ would be true, and since exactly one of $p$'s opposing-sign literals involving $l_i$ is true, no choice of $l_i$ would satisfy $p$, and this would contradict the satisfiability of $(\mathcal{P}, \mathcal{L})$. } Thus, $p$ will be satisfied by any assignment for $l_i$. Remove $p$ from $\mathcal{P}$.

\item If $p$ never had a third signed literal: since $p$ is a Not-2 predicate, $p$ is trivially satisfied (every assignment for $\mathcal{L}$ satisfies $p$). Remove $p$ from $\mathcal{P}$.

\end{enumerate}
\end{enumerate}

%Question: do I use this notation at all????
Call the set of all literals fixed during this procedure $\mathcal{L}^f$, and the set of all predicates removed from the original $\mathcal{P}$ by $\mathcal{P}^r$.
After executing the above procedure for every $p\in \mathcal{P}$, observe that $(\mathcal{P}, \mathcal{L})$ has the following property.  \\

\noindent \textbf{Property 1.} \textit{Any predicate $p'\in \mathcal{P}$ that has multiple occurrences of the same literal must have a very specific form: either $p'$ contains two identical signed literals and a third signed literal corresponding to a different index, or  $p'$ contains two identical signed literals and a third literal whose truth value has been permanently fixed to either false or true. These forms correspond to sub-cases 2(c) and 2(d) above: in all other sub-cases, $p$ was removed from $\mathcal{P}$.}\\

Further observe that truth values were only permanently fixed when we could reason unequivocally about the value they must take in every satisfying assignment. Thus, since the original $(\mathcal{P}, \mathcal{L})$ was satisfiable, the updated $(\mathcal{P}, \mathcal{L})$ with the current partial logical assignment for $\mathcal{L}^f$ is still satisfiable. Before proceeding, notice also that a predicate was only removed from $\mathcal{P}$ when we could be certain that it would be satisfied by an arbitrary logical assignment that extends the partial assignment already constructed for permanently-fixed variables in $\mathcal{L}^f$. 

Next, we make a final update to  $(\mathcal{P}, \mathcal{L})$. Call any literal $l\in \mathcal{L}$ which appears in at most one predicate from $\mathcal{P}$ a \textit{loner literal}. Call the set of predicates from $\mathcal{P}$ which contain at least one loner literal by $\mathcal{P}^l$. 
We consider temporarily ignoring predicates in $\mathcal{P}^l$ until after all non-loner literals have been fixed. 

Consider an arbitrary partial assignment for $(l_1,...,l_{|\mathcal{L}|})$ that fixes every literal \textit{except for the Loner Literals}. If a Not-2 predicate  $p\in \mathcal{P}$ contains one or more loner literals (in either positive or negated form), then this arbitrary partial assignment may be easily extended to satisfy $p$: by manipulating the $\{T,F\}$ value of a contained loner literal at least two distinct total numbers of true signed literals can be reached for predicate $p$ (this follows from Property 1). At most one of these totals can be equal to 2, and the other must be some acceptable total for $p$ (such that $p$ is satisfied). Since the loner literals in $p$ appear in no other predicates (by definition), their value may be fixed one-by-one in this way to satisfy all predicates in $\mathcal{P}^l$. Since we can easily completely satisfy predicates containing loner literals at the end, we ignore such predicates for now: if a predicate contains a loner literal, remove that predicate from $\mathcal{P}$. Next remove all loner literals from $\mathcal{L}$. 

Notice that removing the predicates in $\mathcal{P}^l$ may have caused some additional literals to become loner literals. Successively remove additional rounds of loner-containing predicates and loner literals. Mark each loner literal by the round in which it was removed: once we have created a partial assignment for the remaining system we will fix the values of the loner literals in an order that reverses the order in which they were removed from $\mathcal{L}$. Such an ordering ensures that as loners from each round are returned to the instance, the argument made in the previous paragraph about how to choose their value will always apply. 

We now have the $(\mathcal{P}, \mathcal{L})$ that we will argue about for the remainder of the reduction.  We make a few observations before starting Stage 2. Property 1 still holds, and as a result of the loner literal-removal and loner-containing-predicate removal process, we have:\\

\noindent \textbf{Property 2.} \textit{After the updates in Stage 1, Every literal $l\in \mathcal{L}$ appears in some form (negated or positive) in at least two predicates from $\mathcal{P}$.}\\

Consider the two types of predicates removed from the original instance during Stage 1.  Each predicate in $\mathcal{P}^r$ (those removed in the first part of Stage 1) is already guaranteed to be satisfied by any extension of the partial assignment that has been  permanently fixed on $\mathcal{L}^f$. Further, for any partial assignment for literals now remaining in $\mathcal{L}$ that leaves the values of loner literals (from every round) unassigned, 
each predicate removed for containing a loner literal (in any round) can be efficiently satisfied by appropriate choices for the loner literals. 

Since $100\%$ of the predicates removed from $\mathcal{P}$ in Stage 1 can be satisfied in one of these two ways (in polynomial time), any fraction of the predicates we can satisfy for the remaining updated instance $(\mathcal{P}, \mathcal{L})$ will be a lower bound on the fraction of the original predicates that are satisfied. Thus, to get a contradiction, it is sufficient to show that for our remaining satisfiable Max-Not-2 instance we can satisfy a $(\frac{5}{8}+\epsilon)$ fraction of the remaining predicates $\mathcal{P}$. We construct such an assignment for the remaining literals in $\mathcal{L}$ over stages 2 and 3.\\

%%%%
%%%%
%%%%
%%%%
%%%%
%%%%
%%%%
%%%%

\section{Appendix 2: OXR Instance Preprocessing \\(for Proof of Theorem \ref{extfracharddeg2}, Stage 1) }

Conceptually the following appendix provides arguments highly analogous to those in Appendix 1. Details vary from Appendix 1 because of the different form of the predicates (OXR rather than Not-2 predicates) so we include them for completeness.

Suppose that we are given an arbitrary satisfiable input $(\mathcal{P}, \mathcal{L})$ for the \textit{Max-OXR Problem of arity 3}. We argue that the instance may be preprocessed (removing some predicates and fixing some literals to be true/false) so that several useful properties hold.
Each predicate removal and literal assignment we make to $(\mathcal{P}, \mathcal{L})$ preserves the properties that a satisfying assignment exists and that if we obtain an $\alpha$-approximate logical assignment for the updated instance, then this logical assignment is \emph{at minimum} $\alpha$-approximate for the original instance.

For an OXR predicate of form:
\vspace{-2mm}
\begin{align*}\label{oxrform}
OXR(q_1,q_2,q_3)=q_1 \vee (q_2 \oplus q_3),
\end{align*}
we say that $q_1$ is in the \textit{special position} of $p$ and that $q_2$ and $q_3$ are in the \textit{symmetric positions} of $p$. Iterate through the predicates in $\mathcal{P}$.  Consider the signed literals in the symmetric positions of $p$: if both of these signed literals correspond to the same literal, then update $(\mathcal{P}, \mathcal{L})$ according to which of the following cases applies.

\begin{enumerate}
\item If the signed literals in the symmetric positions of $p$ are identical, then their xor must be false (either both of the symmetric-position signed literals are true, or both of the symmetric-position signed literals are false). Thus, in every satisfying assignment the special-position signed literal of $p$ must be true. 
Let $l_i$ denote the literal corresponding to the signed literal in the special position of $p$. Substitute the forced value of $l_i$ into every predicate containing a signed form of $l_i$. Say $l_i$ has been permanently fixed. Remove 
%$l_i$ from $\mathcal{L}$, and remove 
$p$ from $\mathcal{P}$. 

\item Otherwise the signed literals in the symmetric positions of $p$ are in opposing forms (one positive, one negated). In this case their xor is true for every possible assignment. Since $p$ is trivially satisfied, remove $p$ from $\mathcal{P}$.
\end{enumerate}
    
   Call the set of all literals fixed during this procedure $\mathcal{L}^f$, and the set of all predicates removed from the original $\mathcal{P}$ by $\mathcal{P}^r$.
After executing the above procedure for every $p\in \mathcal{P}$, observe that $(\mathcal{P}, \mathcal{L})$ now has the following property.  \\

\noindent \textbf{Property 1.} \textit{If $p\in \mathcal{P}$, then the two signed literals in the symmetric positions of $p$ correspond to unique literals.}\\

Literals were only permanently fixed (and removed) when we could reason unequivocally about the truth value they must take in every satisfying assignment. Thus, since the original $(\mathcal{P}, \mathcal{L})$ was satisfiable, the updated $(\mathcal{P}, \mathcal{L})$ is still satisfiable. A predicate was only removed from $\mathcal{P}$ when we could be certain that it would be satisfied by any assignment that extends the partial assignment already fixed for $\mathcal{L}^f$.
    
As in the previous proof, we make a final update to  $(\mathcal{P}, \mathcal{L})$ to remove literals that appear in only one predicate (and the predicates that contain such literals). Call any literal $l\in \mathcal{L}$ which appears in at most one predicate from $\mathcal{P}$ a \textit{loner literal}. Call the set of predicates from $\mathcal{P}$ which contain a loner literal by $\mathcal{P}^l$.  We consider temporarily ignoring predicates in $\mathcal{P}^l$ until all non-loner literals have been fixed. For $p\in \mathcal{P}^l$, suppose that the truth values for all literals in $p$, except one loner-literal $l_i$, have already been fixed.\footnote{If more than one signed loner literal in $p$ remains unfixed, then fix all but one arbitrarily, then proceed.} Then:
\begin{itemize}
\item If $l_i$ corresponds to the signed literal in the special position of $p$, then clearly $l_i$ may be set so $p$'s special-position signed literal is true (and $p$ is satisfied). By definition, this choice for $l_i$ (a loner literal) affects no other predicates.
\item If $l_i$ corresponds to a signed literal in a symmetric position of $p$, then, regardless of the truth value of the signed literal in $p$'s other symmetric position\footnote{From Property 1, this other signed literal is not a form of $l_i$.}, there is a truth assignment for $l_i$ that makes $p$'s xor true (so that $p$ is satisfied). By definition, this choice for $l_i$ affects no other predicates. 
\end{itemize}

Thus, we ignore OXR predicates containing loner literals for now: given an arbitrary partial assignment for $(l_1,...,l_{|\mathcal{L}|})$ that fixes every literal \textit{except for the loner literals}, a predicate  $p\in \mathcal{P}$ that contains one or more loner literals can be satisfied by manipulating the $\{T,F\}$ value of the contained loner literal. 
Thus, if a predicate contains a loner literal, remove that predicate from $\mathcal{P}$. Next remove all loner literals from $\mathcal{L}$. 

Removing predicates may cause additional literals to become loner literals. Successively remove additional rounds of loner-containing predicates and loner literals. Mark each loner literal by the round in which it was removed: once we have created a partial assignment for the remaining system we will fix the values of the loner literals in an order that reverses the order in which they were removed from $\mathcal{L}$ such that all loner-containing predicates are satisfied. 

We now have the $(\mathcal{P}, \mathcal{L})$ that we will argue about for the remainder of the reduction. As in the previous proof, observe that: \\

\noindent \textbf{Property 2.} \textit{After the updates in Stage 1, Every literal $l\in \mathcal{L}$ appears in some form (negated or positive) in at least two predicates from $\mathcal{P}$.}\\

As in the proof of Theorem \ref{extfrachard}, since $100\%$ of the predicates removed from $\mathcal{P}$ in Stage 1 can be polynomial-time satisfied after any partial assignment is fixed for the remaining instance $(\mathcal{P}, \mathcal{L})$, it is sufficient to show that for this remaining instance we can satisfy a $(\frac{6}{8}+\epsilon)$ fraction of the predicates in polynomial time. We construct such an assignment for the remaining literals over stages 2 and 3.
\end{document}